\newtheorem{theorem}{Theorem}[section]
\newtheorem{lemma}[theorem]{Lemma}
\newtheorem{corollary}[theorem]{Corollary}
\newtheorem{observation}[theorem]{Observation}
\newtheorem{claim}[theorem]{Claim}
\newcommand{\T}{{\mathcal T}}
\newcommand{\C}{{\mathcal C}}
\newcommand{\NN}{{\mathbb N}}
 \theoremstyle{definition}
 \theoremstyle{remark}
 \numberwithin{equation}{section}
\begin{document}

%
%
%
%
%
%
%
%
%

\title[Computing MP distance between binary phylogenetic trees]{On the complexity of computing MP\\ distance between binary phylogenetic trees}


\author{Steven Kelk}
\address{Department of Knowledge Engineering (DKE)\\ Maastricht University\\ P.O. Box 616, 6200
MD Maastricht\\ The Netherlands}
\email{steven.kelk@maastrichtuniversity.nl}


\author{Mareike Fischer}

\address{Ernst-Moritz-Arndt University of Greifswald \\ Department for Mathematics and Computer Science \\ Walther-Rathenau-Str. 47 \\ 17487 Greifswald, Germany}
\email{email@mareikefischer.de}


\subjclass{05C15; 05C35; 90C35; 92D15}

\keywords{Maximum Parsimony, phylogenetics, tree metrics, NP-hard, binary trees}

\date{\today}


\begin{abstract}
Within the field of phylogenetics there is great interest in distance measures to quantify the dissimilarity of two trees. Recently, a new distance measure has been proposed: the Maximum Parsimony (MP) distance. This is based on the difference of the parsimony scores of a single character on both trees under consideration, and the goal is to find the character which maximizes this difference. Here we show that computation of MP distance on two \emph{binary} phylogenetic trees is NP-hard. This is a highly nontrivial extension of an earlier NP-hardness proof for two multifurcating phylogenetic trees, and it is particularly relevant given the prominence of binary trees in the phylogenetics literature. As a corollary to the main hardness result we show that computation of MP distance is also hard on binary trees if the number of states available is bounded. In fact, via a different reduction we show that it is hard even if only two states are available. Finally, as a first response to this hardness we give a simple Integer Linear Program (ILP) formulation which is capable of computing the MP distance exactly for small trees (and for larger trees when only a small number of character states are available) and which is used to computationally verify several auxiliary results required by the hardness proofs.

\end{abstract}
\maketitle

\pagebreak
\section{Introduction}\label{sec:intro}

When present day species are considered and their evolutionary relationships are to be investigated, phylogeneticists often seek to estimate the best evolutionary tree explaining the given species data (e.g. DNA alignments). However, it is well known that different data sets on the same species can lead to different trees, or that different phylogenetic tree estimation methods, like e.g. Maximum Parsimony or Maximum Likelihood or distance based methods, can lead to different trees even for the same data set \citep{husonsteel,fischerthatte2009}. Thus, in practice one is often confronted with multiple trees, and it is therefore interesting to measure how different these trees really are. A new way of determining their relative similarity has recently been proposed \citep{fischer2014maximum}: the Maximum Parsimony distance (or MP distance, for short).  

This metric basically requires the search for a character which has a low parsimony score on one of the trees involved and a high score on the other one. In \citep{fischer2014maximum} it has been shown that calculating the MP distance between two trees is NP-hard. The proof presented there required non-binary trees (sometimes also called \emph{multifurcating} trees). This was not entirely satisfactory, for the following reason. In many branches of phylogenetics multifurcating trees are used to model uncertainty about the precise order of branching events \citep{maddison1989reconstructing}, in which case the term \emph{unresolved} is often used instead of multifurcating. Distance measures which interpret multifurcations this way often have the property that the distance decreases, or in the worst case stays the same, if one or both of the input trees become more unresolved \citep{lv2013practical}. However, the parsimony score of a single tree increases (or in the best case stays the same) if its edges are contracted to create multifurcations. This is why algorithms
that compute Maximum Parsimony trees usually output binary trees: a non-binary solution can be refined into a binary solution without loss of quality. Given this traditional emphasis on binary trees in the parsimony literature, and the fact that evolutionary events such as mutation or speciation are understood to split a lineage into two parts, not three or more \citep{haws2013}, it is logical to explore the complexity of MP distance on binary trees.

In this paper, we answer this question by showing that computing the MP distance between two binary trees is, unfortunately, also NP-hard. This is by no means a simple extension of the hardness proofs in \citep{fischer2014maximum}. To prove hardness in the present case we are required to develop a rather elaborate array of novel gadgets and arguments, with a strong graph-theoretical flavour. 

Moreover, we show as a corollary to the main theorem that this hardness remains if we restrict the number of character states to four (or more). Note that this covers the most important biological applications, as the DNA and RNA alphabets consist of four character states each, and the protein alphabet consists of 20 states. However, when morphological data is analyzed, binary characters are also often relevant, which is why we consider this case, too. We show that when restricted to two character states, calculating the MP distance is not only also NP-hard, but even APX-hard, which means that there exists a constant $c>1$ such that a polynomial-time $c$-approximation is impossible unless $P = NP$.

As a tentative first step towards addressing the NP-hardness of the MP distance, we present a simple Integer Linear Program (ILP) which calculates this distance (both on a bounded number of states as well as in the unbounded case). The ILP  is rather ``explicit'' in the sense that it has a static, constraint-based formulation of Fitch's algorithm embedded within it. Although faster than naive brute force algorithms, the ILP for an unbounded number of states does not scale well and is limited to trees with approximately 16 taxa. On the other hand, the ILP for binary characters is fast: it can cope with trees with up to 100 taxa in reasonable time. In both cases the ILP is fast enough to verify the MP distance of a number of gadgets used in the hardness proofs. An implementation of this ILP has been made publicly available at \url{http://skelk.sdf-eu.org/mpdistbinary/} \citep{MPdistILP}.

\section{Notation}\label{sec:notation}
%
Recall that an {\it unrooted phylogenetic $X$-tree} is a tree $\T =(V(\T),E(\T))$ on a leaf set $X=\{1,\ldots,n\} \subset V(\T)$. Such a tree is named {\it binary} if it has only vertices of degree 1 (leaves) or 3 (internal vertices). A  {\it rooted phylogenetic $X$-tree} additionally has one vertex specified as the {\it root}, and such a rooted tree is named {\it binary} if the root has degree 2 and all other vertices are of degree 1 (leaves) or 3 (internal vertices). Note that two leaves are said to form a {\em cherry}, if they are connected to the same inner node. Moreover, a rooted binary tree on three taxa is also often referred to as a {\em rooted triplet}, and a rooted tree with only one cherry is also called a {\em caterpillar tree} or {\em caterpillar} for short. We often denote trees in the well-known Newick format \citep{felsenstein_2000}, which uses nested parentheses to group species together according to their degree of relatedness. For instance, the tree $((1,2),(3,4))$ is a tree with two so-called cherries $(1,2)$ and $(3,4)$ and a root between the two. 

Furthermore, recall that a {\it character} $f$ is a function $f: X\rightarrow \C$ for some set $\C:=\{c_1, c_2, c_3, \ldots, c_k \}$ of $k$ {\em character states} ($k \in \NN$). Often, $k$ is assumed to equal 4 in order for $\C$ to represent the DNA alphabet $\{A,C,G,T\}$, but in the present paper $k$ is not restricted this way but can be any natural number unless stated otherwise. Note that in the special case where $|f(X)|=2$, we also refer to $f$ as a binary character. In general, when $|f(X)|=r$, $f$ is called an \emph{$r$-state character}. In order to shorten the notation, it is customary to write for instance $f=AACC$ instead of $f(1)=A$, $f(2)=A$, $f(3)=C$ and $f(4)=C$. Note that each $r$-state character $f$ on taxon set $X$ partitions $X$ into $r$ non-empty and non-overlapping subsets $X_i$, $i=1,\ldots,r$, where $x_j,x_k \in X_i$ if and only if $f(x_j)=f(x_k)$. 

Note that in this paper, we refer to a character always with its underlying taxon set partition in mind, i.e. for instance we do not distinguish between $AACC$, $CCAA$ and $CCGG$, and so on. Moreover, when there is no ambiguity and when the stated result holds for both rooted and unrooted trees, we often just write `tree' or `phylogenetic tree' when referring to a phylogenetic $X$-tree.

An {\it extension} of a character $f$ to $V(\T)$ is a map $g: V(\T)\rightarrow \C$ such that $g(i) = f(i)$ for all $i$ in $X$. For such an extension $g$ of $f$, we denote by $l_{g}(\T)$ the number of edges $e=\{u,v\}$ in $\T$ on which a {\em substitution} occurs, i.e. where $g(u) \neq g(v)$. Such substitutions are also often referred to as {\em mutations} or {\em changes}. 
The {\em parsimony score} or {\em parsimony length} of a character $f$ on $\T$, denoted by $l_{f}(\T)$, is obtained by minimizing $l_{g}(\T)$ over all possible extensions $g$ of $f$. For binary trees $\T$, the parsimony score of a character $f$ can easily be calculated with the Fitch algorithm \citep{fitch_1971}. Recall that the bottom-up phase of Fitch starts at the labelled leaves and assigns to the unlabeled parent of two nodes the intersection of both children's label set if it is non-empty, or the union otherwise. The top-down phase then starts at the root with an arbitrary choice of the root states suggested by the bottom-up phase and keeps the current state for the descending nodes whenever this is contained in the label set of these nodes, and takes an arbitrary state out of the label set otherwise. 

This paper deals with the so-called {\em parsimony distance} $d_{MP}$ as introduced in \citep{fischer2014maximum}. This distance is defined as follows: Given two phylogenetic trees $\T_1$ and $\T_2$ on the same set $X$ of taxa, the parsimony distance between these trees is defined as $$d_{MP}(\T_1,\T_2)=\max_f |l_f(\T_1)-l_f(\T_2)|,$$ where the maximum is taken over all characters $f$ on $X$. A character $f$ which maximizes this distance is called an {\em optimal character}. Note that, due to the fact that the parsimony score of a tree (for a given character) is not affected by the presence or absence of a root, parsimony distance is also oblivious to whether the input trees are rooted or unrooted.

For some proofs in this paper we need the notion of a {\em maximum agreement forest}, which is closely linked to the so-called {\em rooted subtree prune and redraft distance} or {\em rSPR distance} for short. Recall that, informally, an agreement forest of two rooted phylogenetic trees is a set of subtrees which are identical in both trees and which in total contain all leaves, see, e.g. \citep{bordewichsemple2005}. A maximum agreement forest is an agreement forest with minimum number of components. A single rSPR move involves moving to a neighboring rooted tree by detaching a branch and re-attaching it elsewhere. The rSPR distance $d_{rSPR}$ is the minimum number of rSPR moves required to transform one rooted tree into another. Maximum agreement forests and  rSPR distance are closely linked by the well-known result that, modulo a rooting technicality, an agreement forest of two rooted trees with $m$ components yields a set of $m-1$ rSPR moves which turn the first tree into the second one \citep{bordewichsemple2005}. 

\section{Preliminaries}\label{sec:prelim}

The following observation, which we will use extensively and implicitly throughout the article, appeared unchanged in our earlier work \citep{fischer2014maximum}. 

\begin{observation}
\label{obs:mostone} Let $f$ be a character on $X$ and $\T$ a tree on $X$. Let $f'$ be any character obtained from $f$ by changing the state of exactly one taxon. Then
$l_f(\T) - 1 \leq l_{f'}(\T) \leq l_f(\T) + 1$ i.e. the parsimony score can change by at most one.
\end{observation}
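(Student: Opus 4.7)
The plan is to prove the two inequalities separately using a single symmetry argument at the end. Let $x$ be the unique taxon whose state differs between $f$ and $f'$. The core idea is that an optimal extension for $f$ can be turned into a (not necessarily optimal) extension for $f'$ by modifying the label at exactly one leaf, and this modification can only affect the one edge incident to $x$.

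More concretely, I would first fix an extension $g$ of $f$ achieving $l_g(\T) = l_f(\T)$. Define $g' : V(\T) \to \C$ by setting $g'(v) = g(v)$ for every internal vertex $v$ and every leaf $v \neq x$, and setting $g'(x) = f'(x)$. By construction $g'$ is an extension of $f'$, since it agrees with $f'$ on every leaf. The only edge whose endpoints receive different labels under $g'$ than under $g$ is the pendant edge $e_x$ incident to $x$; on every other edge, the indicator of whether a substitution occurs is unchanged. Therefore
\[
l_{f'}(\T) \;\leq\; l_{g'}(\T) \;\leq\; l_g(\T) + 1 \;=\; l_f(\T) + 1.
\]

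For the reverse inequality, I would simply note the roles of $f$ and $f'$ are interchangeable: $f$ is itself obtained from $f'$ by changing the state of exactly one taxon (namely $x$), so the same argument with $f$ and $f'$ swapped yields $l_f(\T) \leq l_{f'}(\T) + 1$, which rearranges to $l_f(\T) - 1 \leq l_{f'}(\T)$. Combining both bounds gives the claim.

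There is no real obstacle here; the only mild care needed is to emphasize that we do not assume $g'$ is optimal for $f'$ (we only need it as an upper bound witness), and that the comparison of $l_{g'}(\T)$ with $l_g(\T)$ hinges on the observation that changing one leaf label perturbs the substitution count on at most one edge of $\T$. Neither binariness of $\T$ nor any property of the state set $\C$ is used, so the observation holds in full generality.
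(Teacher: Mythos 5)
Your proof is correct and uses essentially the same idea as the paper's: transplant an optimal extension from one character to the other and observe that only the pendant edge at the altered taxon can change its substitution status. The paper phrases one direction as a proof by contradiction where you invoke symmetry, but the underlying argument is identical.
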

\begin{proof}
Suppose $l_{f'}(\T) \leq l_{f}(\T) - 2$. Consider any extension of $f'$ to the interior
nodes of $\T$ that achieves $l_{f'}(\T)$ mutations. Using the same extension but on $f$
gives at most $l_{f'}(\T)+1$ mutations, because only one taxon changed state. So
$l_{f}(\T) \leq l_{f'}(\T)+1 \leq l_{f}(\T)-1$, which is a contradiction. In the other direction,
take any optimal extension of $f$ and apply it to $f$'. At most one extra mutation will be created, so $l_{f'}(\T) \leq l_{f}(\T) + 1$.
\end{proof}

A more general version of the following lemma appeared earlier in \citep{fischer2014maximum}. Here we have specialized the lemma and its proof to apply to rooted binary trees, which is the type of trees we will construct in the subsequent hardness reductions. 

\begin{lemma}
\label{lem:monochrome}
Let $f$ be an optimal character for two rooted, binary trees $\T_1$ and $\T_2$ i.e. $d_{MP}(\T_1, \T_2) = |l_f(\T_2) - l_f(\T_1)|.$
Without loss of generality assume $l_f(\T_1) < l_f(\T_2)$. Then we can construct in polynomial time an optimal character $f'$
with the
following property: $l_{f'}(\T_1) < l_{f'}(\T_2)$ and for each vertex $u$ of $\T_1$ such that both $u$'s children are leaves (i.e. they form a cherry), $f'$ assigns both children of $u$ the same state. 
\end{lemma}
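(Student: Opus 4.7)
The plan is to prove the lemma by induction on the number of bichromatic cherries of $\T_1$ with respect to $f$, i.e.\ cherries $\{a,b\}$ with $f(a)\neq f(b)$. The base case (zero bichromatic cherries) is immediate with $f':=f$. For the inductive step I would fix a bichromatic cherry $\{a,b\}$ in $\T_1$, say $f(a)=\alpha$ and $f(b)=\beta$, and produce a new character $f''$ which (i) agrees with $f$ except on one of $a,b$, (ii) is still optimal, and (iii) has strictly fewer bichromatic cherries in $\T_1$. Since leaves of distinct cherries are disjoint, the single leaf-relabelling changes only one cherry of $\T_1$, so at most $n/2$ such iterations suffice and the whole construction is polynomial.

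The technical heart is a two-step exchange. Let $u$ denote the parent of the cherry in $\T_1$. First, I would argue that $f$ admits an optimal extension $g$ to $V(\T_1)$ with $g(u)\in\{\alpha,\beta\}$. Indeed, if some optimal extension has $g(u)=\gamma\notin\{\alpha,\beta\}$, then both cherry edges carry a mutation; reassigning $g(u):=\alpha$ removes the mutation on $(u,a)$, retains a single mutation on $(u,b)$, and introduces at most one new mutation on the edge from $u$ to its parent (none if $u$ is the root), so parsimony does not increase. Second, having fixed such a $g$ with, WLOG, $g(u)=\alpha$, define $f''$ by $f''(b):=\alpha$ and $f''(x):=f(x)$ elsewhere. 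Reusing $g$ unchanged on interior vertices as an extension of $f''$ eliminates the mutation previously present on $(u,b)$ and alters no other edge, so
\[
l_{f''}(\T_1)\;\leq\; l_g(\T_1) - 1 \;=\; l_f(\T_1) - 1.
\]

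It remains to check optimality and that the strict inequality $l_{f''}(\T_1)<l_{f''}(\T_2)$ is preserved. Observation~\ref{obs:mostone} applied to $\T_2$ yields $l_{f''}(\T_2)\geq l_f(\T_2)-1$. Combining,
\[
l_{f''}(\T_2)-l_{f''}(\T_1)\;\geq\; \bigl(l_f(\T_2)-1\bigr)-\bigl(l_f(\T_1)-1\bigr)\;=\;d_{MP}(\T_1,\T_2).
\]
Since by definition $d_{MP}(\T_1,\T_2)$ is the maximum of $|l_h(\T_2)-l_h(\T_1)|$ over characters $h$, equality must hold throughout. Hence $f''$ is optimal and in particular $l_{f''}(\T_1)<l_{f''}(\T_2)$. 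Applying the inductive hypothesis to $f''$ then delivers the desired $f'$, in which every cherry of $\T_1$ is monochromatic.

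The step I expect to be the main obstacle is the first exchange: one must carefully verify that rerouting $g(u)$ from a ``foreign'' state $\gamma$ to $\alpha$ really does not increase parsimony, which requires a small case analysis on whether the edge from $u$ to its parent $p$ currently carries a mutation and on the value of $g(p)$ (with the degenerate rooted case $|X|=2$ handled trivially). Everything else is routine bookkeeping: a single Fitch-pass certifies the parsimony scores needed per iteration, and each iteration strictly reduces the integer-valued number of bichromatic cherries of $\T_1$, guaranteeing polynomial-time termination.
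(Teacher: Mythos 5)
Your proposal is correct and follows essentially the same route as the paper: recolour one leaf of a bichromatic cherry to match an optimal extension's state at the cherry's parent, then use Observation~\ref{obs:mostone} on $\T_2$ to force equality and conclude optimality is preserved. The only cosmetic difference is that the paper obtains an optimal extension with $g(u)\in\{f(a),f(b)\}$ directly from Fitch's algorithm (whose bottom-up set at $u$ is exactly $\{f(a),f(b)\}$), whereas you derive it by a small hand exchange argument; both are fine.
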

\begin{proof}
Consider a vertex $u$ of $\T_1$ such that both of its children are taxa, but such that $f$ assigns
the two children different states. We calculate an optimal extension of $f$ to the interior
nodes of $\T_1$ by applying Fitch's algorithm. Let
$s$ be the state allocated to $u$ by the top-down phase of Fitch. Choose the child of $u$ that does not have state $s$ and
change its state to $s$. This yields a new character $f^{*}$. Clearly, $l_{f^{*}}(\T_1) < l_{f}(\T_1)$,
simply by using the same extension that Fitch gave. Combining this with Observation \ref{obs:mostone} gives
$l_{f^{*}}(\T_1) = l_{f}(\T_1) - 1$ and thus $l_{f^{*}}(\T_2) = l_{f}(\T_2)-1$ (otherwise $f$
could not have been optimal). Hence, $f^{*}$ is also an optimal character, and
$l_{f^{*}}(\T_1) < l_{f^{*}}(\T_2)$. This process can be repeated for as long as necessary. Termination in polynomial time is guaranteed because each taxon has its state changed at most once.
\end{proof}

\begin{observation}
\label{obs:boundedmonochrome}
Lemma \ref{lem:monochrome} also holds for optimal characters under the $d_{MP}^{i}(\T_1, \T_2)$ model.
\end{observation}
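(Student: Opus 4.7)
The plan is to show that the construction in the proof of Lemma~\ref{lem:monochrome} never introduces new character states, so the resulting character remains admissible under the state-bounded model $d_{MP}^{i}$. Concretely, I would argue as follows.

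First, I would recall the definition implicit in the notation: under the $d_{MP}^{i}(\T_1,\T_2)$ model, the maximum in the definition of MP distance is taken only over characters $f$ with $|f(X)| \le i$. So to extend Lemma~\ref{lem:monochrome} it suffices to check that the modification $f \mapsto f^{*}$ used in its proof preserves the property $|f^{*}(X)| \le |f(X)|$.

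Next, I would revisit the mechanics of the construction. The only change made is to take a leaf $v$ (one of the two children of an internal vertex $u$ whose children form a cherry) and reassign its state to the state $s$ that Fitch's top-down phase places at $u$. The crucial observation is that $s$ is an element of Fitch's bottom-up set at $u$, and every such set is built as either an intersection or a union of sets that bottom out at the leaf labels. Consequently $s \in f(X)$, i.e.\ $s$ is a state already used by $f$. Therefore $f^{*}(X) \subseteq f(X)$, and in particular $|f^{*}(X)| \le |f(X)| \le i$, so $f^{*}$ is still a valid character under the $d_{MP}^{i}$ model.

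With that established, the rest of the proof of Lemma~\ref{lem:monochrome} goes through unchanged: optimality of $f^{*}$ under $d_{MP}^{i}$ follows exactly as before (Observation~\ref{obs:mostone} is state-agnostic, and $f$ was optimal among characters with at most $i$ states, so $f^{*}$ achieves the same value within the same class), and the process terminates in polynomial time because each taxon is relabelled at most once. The only potential subtlety, and the one step I would explicitly flag, is the verification that $s$ lies in $f(X)$; everything else is a direct transcription of the earlier argument.
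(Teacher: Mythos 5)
Your proposal is correct and matches the paper's approach: the paper's entire proof is the one-line remark that the transformation in Lemma~\ref{lem:monochrome} does not increase the number of states, and your argument simply makes explicit why (the Fitch state $s$ assigned to $u$ lies in a bottom-up set built from the leaf labels, hence $s \in f(X)$ and $f^{*}(X) \subseteq f(X)$). The extra detail you flag is a sound and welcome elaboration, not a deviation.
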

\begin{proof}
The transformation in the proof of Lemma \ref{lem:monochrome} does not increase the number of states in the
character.
\end{proof}


%

\section{MP distance on binary trees is NP-hard}

\subsection{The symmetry-breaking construction}
\label{subsec:symbreak}

In the hardness proof in Section \ref{subsec:reduc} we will construct two trees $\T_E$ and
$\T_V$ and a central fact used in the proof of correctness of the reduction is that, for all
optimal characters $f$, $l_f(\T_E) < l_f(T_V)$. In this section we show how to construct
a gadget to enforce this property. Note that all the trees constructed in this section
are binary. (As we demonstrated in \citep{fischer2014maximum} constructing such a symmetry-breaking gadget is far easier in the non-binary case).

\begin{figure}[ht]      \centering\vspace{0.5cm} 
    \includegraphics[width=10cm]{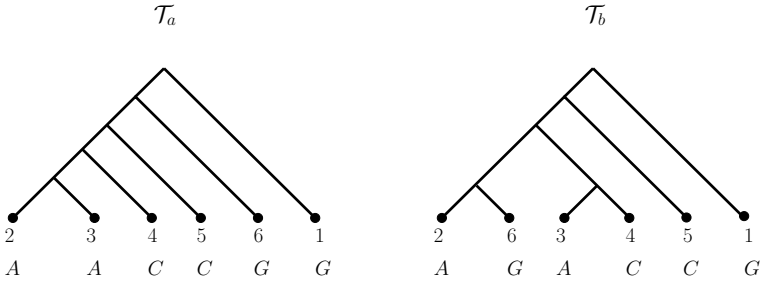}
 \caption{The two ``asymmetric'' trees $\T_a$ and $\T_b$ and an optimal character
$f_{asym}=GAACCG$. }
\label{fig:binaryAntisymunbounded}
\end{figure}

Consider the two rooted trees
\begin{align*}
\T_a = (((((2,3),4),5),6),1)\\
\T_b = ((((2,6),(3,4)),5),1)
\end{align*}
shown in Figure \ref{fig:binaryAntisymunbounded}. It can be verified computationally
that $d_{MP}(\T_a,\T_b) = 2$, achieved for example by
character $f_{asym}=GAACCG$\footnote{Note that for this specific character
there exist optimal extensions in both trees such that the root is allocated state $G$.} with $l_f(\T_a)=2$ and $l_f(\T_b)=4$.
Moreover,  if $f$ is an optimal character, then $l_f(\T_a) + 2 = l_f(\T_b)$. Expressed differently:
there does not exist any optimal character $f$ for which $l_f(\T_a) > l_f(\T_b)$, so the instance is ``asymmetric''. For two trees $\T_1$ and $\T_2$, let
\[
gap(\T_1, \T_2) =| \max_{f} ( l_f(\T_2) - l_f(\T_1)) - \max_{f}( l_f(\T_1) - l_f(\T_2))|
\] where $f$ ranges over all characters (not just optimal ones). Note that
 $gap(\T_a, \T_b) = 1$ because $\max_{f} ( l_f(\T_b) - l_f(\T_a)) = 2$
and $\max_{f}( l_f(\T_a) - l_f(\T_b)) = 1$, where e.g. the character $f = AACCAA$
achieves $l_f(\T_a) - l_f(\T_b) = 2-1= 1$.

We now describe an iterative construction such that, for any desired gap $g$, we can construct two trees $\T_1$ and $\T_2$, both on
$O(g)$ taxa, such that $gap(\T_1, \T_2) \geq g$.

We start with $\T_a$ and $\T_b$.  Let $\T_A$ be the rooted tree on 12 taxa obtained by taking two disjoint copies of $\T_a$ and joining them together via their roots $\rho_1, \rho_2$ to a newly introduced root $\rho$. (Here, the copying operation is assumed to introduce new taxon labels to prevent the same taxon occuring twice in the same tree). $\T_B$ is defined the same way, but with respect to $\T_b$.

\begin{claim}
$gap(\T_A, \T_B) \geq 2$.
\label{claim:gap2}
\end{claim}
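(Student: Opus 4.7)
My plan is to establish $gap(\T_A, \T_B) \geq 2$ by separately lower bounding $\max_F (l_F(\T_B) - l_F(\T_A)) \geq 4$ and upper bounding $\max_F (l_F(\T_A) - l_F(\T_B)) \leq 2$. Any character $F$ on the 12 taxa of $\T_A$ (equivalently $\T_B$) splits uniquely as $F = f_1 \cup f_2$, where $f_i$ is the restriction to the $i$-th copy. Since $\T_A$ is built by attaching two rooted copies of $\T_a$ to a new root (and analogously for $\T_B$), Fitch's algorithm applied at the join gives
\begin{equation*}
l_F(\T_A) = l_{f_1}(\T_a) + l_{f_2}(\T_a) + c_A, \qquad l_F(\T_B) = l_{f_1}(\T_b) + l_{f_2}(\T_b) + c_B,
\end{equation*}
where $c_A, c_B \in \{0,1\}$ each equal $1$ precisely when the Fitch sets at the two copy-roots (in $\T_A$ or $\T_B$ respectively) are disjoint.

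For the lower bound, take $F = f_{asym} \cup f_{asym}$. A direct Fitch computation using $f_{asym} = GAACCG$ shows that the root Fitch set of $\T_a$ is $\{G\}$ and that of $\T_b$ is $\{A, C, G\}$; in each of $\T_A$ and $\T_B$ the two copy-roots share a common Fitch state, so $c_A = c_B = 0$, and hence $l_F(\T_B) - l_F(\T_A) = 2 \cdot 4 - 2 \cdot 2 = 4$.

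For the upper bound, set $d_i = l_{f_i}(\T_a) - l_{f_i}(\T_b)$; then $l_F(\T_A) - l_F(\T_B) = d_1 + d_2 + (c_A - c_B)$, with $d_i \leq 1$ for each $i$ by the stated bound on $\T_a$ vs $\T_b$. If $d_1 + d_2 \leq 1$, the total is at most $2$; the only troublesome case is $d_1 = d_2 = 1$, where we need $c_A - c_B \leq 0$. The key auxiliary fact is
\begin{equation*}
(\star) \quad l_f(\T_a) - l_f(\T_b) = 1 \;\Longrightarrow\; \mathrm{Fitch}_{\T_b}(f) \subseteq \mathrm{Fitch}_{\T_a}(f),
\end{equation*}
where $\mathrm{Fitch}_T(f)$ is the Fitch set at the root of $T$. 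Granted $(\star)$, the containment propagates through intersection: $\mathrm{Fitch}_{\T_b}(f_1) \cap \mathrm{Fitch}_{\T_b}(f_2) \subseteq \mathrm{Fitch}_{\T_a}(f_1) \cap \mathrm{Fitch}_{\T_a}(f_2)$, so $c_A = 1$ forces $c_B = 1$ and hence $c_A - c_B \leq 0$. This gives $l_F(\T_A) - l_F(\T_B) \leq 2$, and combining with the lower bound yields $gap(\T_A, \T_B) \geq 4 - 2 = 2$.

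The principal obstacle is verifying $(\star)$. Since parsimony scores and Fitch sets depend only on the partition of $X$ induced by $f$, it suffices to enumerate the (finitely many) partitions of the six-element taxon set, check which of them yield $d = 1$, and verify the containment for each; this can also be confirmed using the ILP software cited in the introduction. A more structural proof of $(\star)$ that exploits the specific topologies of $\T_a$ and $\T_b$ (for instance, using that the caterpillar shape of $\T_a$ tends to produce a larger root Fitch set when its parsimony is higher) would be preferable but is not essential.
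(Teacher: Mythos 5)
Your lower bound $\max_F(l_F(\T_B)-l_F(\T_A))\geq 4$ is exactly the paper's argument: duplicate $f_{asym}$ and observe that the two copy-roots receive identical Fitch sets, so no union event occurs at the new root of either tree. Where you genuinely diverge is the upper bound $\max_F(l_F(\T_A)-l_F(\T_B))\leq 2$: the paper calls an analytic proof ``technical'', omits it, and relies on an exhaustive computational search over characters on the $12$ taxa of $\T_A,\T_B$. You instead use the exact Fitch decomposition $l_F(\T_A)=l_{f_1}(\T_a)+l_{f_2}(\T_a)+c_A$ to reduce the whole question to the single-copy containment $(\star)$, whose monotonicity under intersection disposes of the only dangerous case $d_1=d_2=1$. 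This is a real gain in transparency: the residual finite check lives on $6$ taxa rather than $12$, and $(\star)$ can in fact be finished by hand. Writing $S_a,S_b$ for the root Fitch sets of the $5$-taxon subtrees on $\{2,\dots,6\}$, the only way containment can fail is $f(1)\in S_a\setminus S_b$, which forces the subtree scores to differ by $2$; a short counting argument (a score-$1$ character on $5$ taxa uses at most $2$ states and hence scores at most $2$ on any tree, while a score-$4$ character must have all taxa distinct) rules this out, and the remaining case $f(1)\notin S_a\cup S_b$ reduces to checking the handful of convex $2+2+1$ and $3+2$ partitions on $(((2,6),(3,4)),5)$, all of which satisfy $S_b\subseteq S_a$. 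Two caveats, neither fatal: (i) you still import the computationally verified fact $\max_f(l_f(\T_a)-l_f(\T_b))=1$ to get $d_i\leq 1$, so the argument is not computation-free; (ii) the ILP of Section 6 reports parsimony-score differences, not root Fitch sets, so it cannot directly ``confirm'' $(\star)$ as you suggest --- the direct enumeration (or the hand argument above) is needed there.
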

\begin{proof}
We will show that $\max_f ( l_f(\T_B) - l_f(\T_A) ) \geq 4$ and
$\max_f( l_f(\T_A) - l_f(\T_B)) \leq 2$, from which the claim will follow. Let
$f$ be a character such that $l_f(\T_a) + 2 = l_f(\T_b)$, i.e. $f$ is an optimal character for
$\T_a, \T_b$. We extend character $f$
to become character $F$ on $\T_A, \T_B$ in the natural way i.e. disjoint copies of the
same taxon receive the same character state. If we run the bottom-up phase of Fitch's algorithm on $\T_A$ and $\T_B$, we observe that each copy of $\T_a$ induces 2 fewer mutations than its corresponding copy of $\T_b$. Moreover, the set of
states identified (by Fitch's bottom-up phase) to be possible at $\rho_1$ will be equal to the set of states identified
to be possible at $\rho_2$, so there will be no mutations incurred in $\T_A$ on the two edges
incident to its root $\rho$. By the same argument, there will be no mutations incurred in $\T_B$
on the edges incident to its root. Hence, $ l_F(\T_B) - l_F(\T_A) \geq 4$. Showing
$\max_f( l_f(\T_A) - l_f(\T_B)) \leq 2$ is possible analytically but it is technical. We
therefore omit the proof, noting however that we have used an exhaustive computational search to verify that (a) $\max_f ( l_f(\T_B) - l_f(\T_A) ) = 4$, where the maximum is reached e.g. by $f=ABBCCAACCBBA$ and
(b) $\max_f( l_f(\T_A) - l_f(\T_B)) = 2$, where the maximum is reached e.g. by $f=AABBAAAABBAA$. Note that our ILP described in Section \ref{sec:ilp} can also be used to verify the claim. The computational search thus allows us to draw the slightly stronger conclusion that $gap(\T_A, \T_B) = 2$.
\end{proof}

Let $\T^{k}_A$ be the rooted tree on $12k$ taxa obtained by arranging
$k$ disjoint copies of $\T_A$ along a caterpillar backbone. That is, $\T^{1}_A = \T_A$ and
for $k>1$, $\T^{k}_A$ is obtained by joining $\T^{k-1}_{A}$ and $\T_{A}$ via
a new root. $\T^{k}_B$ is defined analogously.

\begin{claim}
\label{claim:smallgap}
$gap(\T^{2}_A, \T^{2}_B) \geq 3$.
\end{claim}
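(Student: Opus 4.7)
The plan is to mirror the strategy of Claim \ref{claim:gap2}: separately lower bound $\max_F (l_F(\T^2_B) - l_F(\T^2_A))$ and upper bound $\max_F (l_F(\T^2_A) - l_F(\T^2_B))$, then subtract.

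For the lower bound, I would take an optimal character $f$ for $(\T_A, \T_B)$ with $l_f(\T_B) - l_f(\T_A) = 4$ (for instance $f = ABBCCAACCBBA$, which works by Claim \ref{claim:gap2}), and build a character $F$ on the $24$ taxa of $\T^2_A, \T^2_B$ by copying $f$ identically onto each of the two disjoint $\T_A$-copies (and onto each of the two disjoint $\T_B$-copies). Because the two $\T_A$-copies then carry identical character data, Fitch's bottom-up phase produces the same set of possible states at both of their roots; the intersection is therefore non-empty and no mutation is forced on the pair of edges incident with the new root of $\T^2_A$. The same reasoning handles $\T^2_B$. Hence $l_F(\T^2_A) = 2\, l_f(\T_A)$ and $l_F(\T^2_B) = 2\, l_f(\T_B)$, so $l_F(\T^2_B) - l_F(\T^2_A) = 8$.

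For the upper bound, I would consider an arbitrary character $F$ on the taxa of $\T^2_A$, write $F_L, F_R$ for its restrictions to the two halves, and let $S^A_L, S^A_R$ (resp.\ $S^B_L, S^B_R$) denote the Fitch bottom-up state sets at the roots of the two $\T_A$-copies (resp.\ $\T_B$-copies). A short Fitch-style case analysis then yields
\[
l_F(\T^2_A) = l_{F_L}(\T_A) + l_{F_R}(\T_A) + c_A, \qquad l_F(\T^2_B) = l_{F_L}(\T_B) + l_{F_R}(\T_B) + c_B,
\]
where $c_A, c_B \in \{0,1\}$, and $c_A = 0$ exactly when $S^A_L \cap S^A_R \neq \emptyset$ (analogously for $c_B$). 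Subtracting these identities, applying the bound $l_f(\T_A) - l_f(\T_B) \leq 2$ for every character $f$ from Claim \ref{claim:gap2}, and noting that $c_A - c_B \leq 1$, I obtain $l_F(\T^2_A) - l_F(\T^2_B) \leq 2 + 2 + 1 = 5$.

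Combining the two estimates yields $gap(\T^2_A, \T^2_B) \geq 8 - 5 = 3$. The step I expect to require the most care is the decomposition formula for $l_F$ on $\T^2_A$ and $\T^2_B$: one must argue that after optimally extending $F_L$ and $F_R$ to the two halves, the choice of state at the new root incurs $0$ additional mutations on the two attaching edges when $S_L \cap S_R \neq \emptyset$, and exactly $1$ additional mutation otherwise. This follows from the standard fact that fixing the root state of a rooted subtree to some $t$ in its Fitch bottom-up set preserves the minimum parsimony score of that subtree, while fixing it outside that set increases the score by at least one; everything else in the argument is straightforward bookkeeping.
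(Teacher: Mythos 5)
Your proposal is correct and follows essentially the same route as the paper: duplicate an optimal character for $(\T_A,\T_B)$ to get the lower bound of $8$ (with the Fitch root-state-set argument showing no mutation is incurred at the new root), and use $\max_f(l_f(\T_A)-l_f(\T_B))\leq 2$ on each half plus at most one root mutation to get the upper bound of $5$. Your explicit decomposition $l_F(\T^2_A)=l_{F_L}(\T_A)+l_{F_R}(\T_A)+c_A$ with $c_A\in\{0,1\}$ just makes precise the bookkeeping that the paper states informally.
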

\begin{proof}
By extending the character $F$ to $\T^{2}_A, \T^{2}_B$ in the usual fashion, and
using the same Fitch-based argument as in the previous proof, we
see that $\max_f ( l_f(\T^{2}_B) - l_f(\T^{2}_A) ) \geq 8$. One the other hand,
due to the fact that $\max_f( l_f(\T_A) - l_f(\T_B)) \leq 2$, the total number of mutations
incurred inside the two copies of $\T_A$ 
can in total be at most 4 more than the total number of
mutations incurred inside the two copies of $\T_B$. In the worst case, $\T^{2}_A$ can perhaps also suffer a single mutation on the two edges incident to the root, while $\T^{2}_B$ suffers none, so $\max_f ( l_f(\T^{2}_A) - l_f(\T^{2}_B) ) \leq 5$. The claim follows.
\end{proof}

\begin{lemma}
\label{lem:gap}
For $k\geq 1$, $gap(\T^{k}_A, \T^{k}_B) \geq k+1$.
\end{lemma}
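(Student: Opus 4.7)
The plan is to mimic the two-sided bounding strategy from Claim~\ref{claim:smallgap} at the general level. I will establish, separately,
\[
\max_f\bigl(l_f(\T^k_B) - l_f(\T^k_A)\bigr) \geq 4k \quad\text{and}\quad \max_f\bigl(l_f(\T^k_A) - l_f(\T^k_B)\bigr) \leq 3k-1,
\]
from which $gap(\T^k_A, \T^k_B) \geq 4k - (3k-1) = k+1$ follows at once. Observe that this specialises correctly to the $\geq 8$ and $\leq 5$ bounds of Claim~\ref{claim:smallgap} when $k=2$, and to Claim~\ref{claim:gap2} when $k=1$.

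For the lower bound I would exhibit a witness character. Starting from the optimal character for $(\T_a,\T_b)$ given in the proof of Claim~\ref{claim:gap2}, I extend it to $\T^k_A$ and $\T^k_B$ by assigning matching taxa across all $k$ copies the same state, yielding a character $F$. Because the Fitch bottom-up phase inside each copy depends only on the restriction of $F$ to that copy, every copy of $\T_A$ produces the same Fitch set $S_A$ at its root; likewise all copies of $\T_B$ agree on a common root Fitch set $S_B$. Consequently every backbone vertex in either $\T^k_A$ or $\T^k_B$ sees two children with equal (hence non-disjoint) Fitch sets, so it contributes no extra mutation. Summing over the $k$ copies then gives $l_F(\T^k_A) = 4k$ and $l_F(\T^k_B) = 8k$, establishing the lower bound.

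For the upper bound the crucial ingredient is the decomposition
\[
l_f(\T^k_A) = \sum_{i=1}^{k} l_{f_i}(\T_A) + B_A(f),
\]
valid for every character $f$, where $f_i$ is the restriction of $f$ to the taxa of the $i$-th copy and $B_A(f) \in \{0,1,\ldots,k-1\}$ counts the Fitch union operations at the $k-1$ backbone vertices; the analogous identity holds for $\T^k_B$. Subtracting the two identities, and invoking $l_{f_i}(\T_A) - l_{f_i}(\T_B) \leq 2$ from Claim~\ref{claim:gap2} together with $B_A(f) \leq k-1$ and $B_B(f) \geq 0$, yields $l_f(\T^k_A) - l_f(\T^k_B) \leq 2k + (k-1) = 3k-1$ uniformly in $f$.

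The only non-routine point is justifying the decomposition identity. This follows from the fact that in Fitch's bottom-up phase each internal vertex contributes either $0$ or $1$ to the parsimony score independently, so the total score partitions cleanly into the sum over the $k$ copies of $\T_A$ (each term depending only on $f_i$) and the contributions at the $k-1$ newly introduced backbone vertices. Once this is articulated, the remaining arithmetic is a verbatim generalisation of the proof of Claim~\ref{claim:smallgap}, and no new combinatorial idea beyond the ones already exploited there is required.
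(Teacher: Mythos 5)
Your proposal is correct and is in substance the paper's own argument: the paper proves the same two bounds ($4k$ from the copied witness character with no backbone mutations, and $3k-1$ from the per-copy bound of $2$ plus at most one union event per backbone vertex) via an induction on $k$, which your explicit Fitch decomposition identity simply unrolls. The decomposition is valid because Fitch's bottom-up computation at each vertex depends only on the subtree below it, so no new idea is needed beyond what you state.
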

\begin{proof}
We prove this statement by induction. For $k \in \{1,2\}$ the lemma has already been proved, so assume $k\geq 3$. By continuing the arguments used in the previous claims, we see that
\[
\max_f ( l_f(\T^{k}_B) - l_f(\T^{k}_A) ) \geq \max_f ( l_f(\T^{k-1}_B) - l_f(\T^{k-1}_A) ) + \max_f( l_f(\T_B) - l_f(\T_A))
\]
and
\[
\max_f ( l_f(\T^{k}_A) - l_f(\T^{k}_B) ) \leq \max_f ( l_f(\T^{k-1}_A) - l_f(\T^{k-1}_B) ) + \max_f( l_f(\T_A) - l_f(\T_B)) + 1
\]
where the 1 in the second expression accounts for the possibility that in $\T^{k}_A$
a mutation is incurred on one of the root edges, while no such mutation is incurred
in  $\T^{k}_B$. Combining the above with the fact that $\max_f( l_f(\T_B) - l_f(\T_A))=4$, $\max_f( l_f(\T_A) - l_f(\T_B)) = 2$, $\max_f ( l_f(\T^{2}_B) - l_f(\T^{2}_A) ) \geq 8$ and
$\max_f ( l_f(\T^{2}_A) - l_f(\T^{2}_B) ) \leq 5$,
we obtain the desired result.
\end{proof}

In addition to Lemma \ref{lem:gap}, we actually also need to know a (polynomial-time computable) expression for $d_{MP}(\T^{k}_A, \T^{k}_B)$. Conveniently, we have
a closed expression for this.

\begin{lemma}
\label{lem:gapdist}
For $k \geq 2$, $d_{MP}(\T^{k}_A, \T^{k}_B) = 8 + 4(k-2) = 4k$.
\end{lemma}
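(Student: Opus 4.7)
The plan is to prove matching lower and upper bounds $d_{MP}(\T^k_A,\T^k_B) \geq 4k$ and $\leq 4k$. For the lower bound, I would extend the character $F$ constructed in the proof of Claim~\ref{claim:gap2} to the $k$-fold caterpillar by giving identical states to corresponding taxa across all $k$ copies. The Fitch argument of Claim~\ref{claim:gap2} then applies copy-by-copy, giving $4$ fewer mutations per copy in $\T^k_A$ than in $\T^k_B$. Since the Fitch root set at the top of every copy contains the common state $G$ in both $\T^k_A$ and $\T^k_B$, the subsequent bottom-up phase along the caterpillar backbone introduces no additional mutations in either tree, yielding a total difference of exactly $4k$.

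For the upper bound, first note that continuing the recursion in the proof of Lemma~\ref{lem:gap} with the base value $\beta_2 \leq 5$ yields $\max_f(l_f(\T^k_A) - l_f(\T^k_B)) \leq 3k-1 < 4k$ for $k \geq 2$, so that direction cannot dominate. It remains to show $\alpha_k := \max_f(l_f(\T^k_B) - l_f(\T^k_A)) \leq 4k$. The naive recursion $\alpha_k \leq \alpha_{k-1} + \alpha_1 + 1$, where the $+1$ accounts for a potential extra mutation on the new backbone edge of $\T^k_B$, only gives $\alpha_k \leq 5k-1$ and is too weak. My plan is therefore to strengthen the inductive hypothesis. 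Writing $l^s_f(\T)$ for the minimum parsimony score of $f$ on $\T$ subject to the root being forced to state $s$, I would prove $H(k)$: for every state $s$ and every character $f$, $l^s_f(\T^k_B) - l^s_f(\T^k_A) \leq 4k$. Decomposing $\T^k$ as the join of $\T^{k-1}$ and $\T$ at a new root of forced state $s$ gives the identity
\[
l^s_f(\T^k_\bullet) = \min_{s_1,s_2}\bigl[\, l^{s_1}_{f^{(1)}}(\T^{k-1}_\bullet) + l^{s_2}_{f^{(2)}}(\T_\bullet) + [s_1 \neq s] + [s_2 \neq s]\,\bigr],
\]
where $f^{(1)}, f^{(2)}$ are the restrictions of $f$ to the two subtrees. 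Letting $(a_1,a_2)$ be a minimizer for the $\T^k_A$ side and plugging the same pair into the $\T^k_B$ side as a feasible choice, the indicator terms $[a_i \neq s]$ cancel under subtraction, leaving two forced-root differences bounded by $H(k-1)$ and $H(1)$ and summing to $4(k-1) + 4 = 4k$. Finally, $H(k)$ yields $\alpha_k \leq 4k$: if $s^*$ is any state minimizing $l^s_f(\T^k_A)$, then $l_f(\T^k_B) - l_f(\T^k_A) \leq l^{s^*}_f(\T^k_B) - l^{s^*}_f(\T^k_A) \leq 4k$.

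The main obstacle is the base case $H(1)$: $l^s_f(\T_B) - l^s_f(\T_A) \leq 4$ for \emph{every} state $s$ and character $f$. This is strictly stronger than Claim~\ref{claim:gap2}, which bounds only the unrestricted difference (and $l^s_f(\T)$ can in principle exceed $l_f(\T)$ by up to $2$ for rooted binary trees, so $H(1)$ does not follow automatically). I would verify $H(1)$ via a minor adaptation of the ILP of Section~\ref{sec:ilp} augmented with a constraint fixing the root state in both trees, consistent with the paper's use of the ILP to confirm other auxiliary claims such as the exact value $\max_f(l_f(\T_B) - l_f(\T_A)) = 4$. Alternatively, since $\T_A$ and $\T_B$ have only $12$ taxa and an optimal character can be taken to use very few states, $H(1)$ is also accessible by direct exhaustive enumeration.
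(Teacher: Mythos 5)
Your lower bound is the paper's: extend $F$ copy-by-copy and argue via Fitch that no mutations arise on the backbone. For the upper bound, however, you take a genuinely different route. The paper does not induct at all: it invokes the inequality $d_{MP} \leq d_{rSPR}$ (proved in \citep{fischer2014maximum}) and exhibits an agreement forest for $\T^{k}_A, \T^{k}_B$ with $4k+1$ components, obtained by cutting off every copy of taxa $3$ and $6$ into singletons; this gives $d_{rSPR} \leq 4k$ and hence $d_{MP} \leq 4k$ with essentially no computation. Your alternative -- strengthening the inductive hypothesis to the forced-root quantity $l^s_f$ so that the troublesome $+1$ backbone terms cancel -- is structurally sound: the Sankoff-style decomposition identity is correct, plugging the $\T^k_A$-minimizer $(a_1,a_2)$ into the $\T^k_B$ side is a valid relaxation, and the passage from $H(k)$ back to the unforced difference via $s^*$ is right. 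You also correctly identify that $H(1)$ is strictly stronger than Claim \ref{claim:gap2} (since forcing a root state can cost up to $2$ extra on a degree-2 root) and must be verified separately. The trade-off is clear: the paper's argument is shorter and piggybacks on known rSPR machinery, while yours is self-contained and would still work for base gadgets admitting no small agreement forest, but it shifts the burden onto a heavier computational check. Two small points you should nail down in that check: the state $s$ ranging over $H(1)$ must include a state not used by any taxon of the subtree (such states can arise as the $s$ passed down from a higher level of the caterpillar), and if you use the ILP rather than enumeration of set partitions you need to re-justify a bound on the number of character states for the \emph{forced-root} objective, since the paper's $\lfloor n/2\rfloor$ bound is derived only for the unforced one. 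Exhaustive enumeration over the roughly $4.2$ million partitions of the $12$ taxa, times a handful of root states, sidesteps both issues.
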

\begin{proof}
From the proof of Lemma \ref{lem:gap} we know that
\[
d_{MP}(\T^{k}_A, \T^{k}_B) = max_f ( l_f(\T^{k}_B) - l_f(\T^{k}_A) ).
\]
Due to the recurrence shown in the proof of that lemma we see,
\[
max_f ( l_f(\T^{k}_B) - l_f(\T^{k}_A) ) \geq 8 + 4(k-2)
\]
We will complete the proof by showing
$d_{MP}(\T^{k}_A, \T^{k}_B) \leq 8 + 4(k-2)$.  To do this, we exploit the fact (proven in \citep{fischer2014maximum}) 
that $d_{MP}(\T^{k}_A, \T^{k}_B) \leq d_{rSPR}(\T^{k}_A, \T^{k}_B)$ i.e. MP distance is a lower bound on the well-known \emph{rooted subtree prune and regraft} (rSPR) distance. In particular,
we prove that $d_{rSPR}(\T^{k}_A, \T^{k}_B) \leq 8 + 4(k-2)$. We do this by showing
that $\T^{k}_A, \T^{k}_B$ permit an \emph{agreement forest} with at most
$8 + 4(k-2) + 1 = 4k + 1$ components. (It is well-known that an agreement forest with $m$ components yields a set of $m-1$ rSPR moves that turn one tree into the other, see \citep{bordewichsemple2005})\footnote{To utilize this agreement forest formulation of rSPR we should first append a new taxon $\rho$ to the root of both trees. However in this case it is easy to check that the omission of $\rho$ does not harm the analysis.}. Now,
observe that $\T^{k}_A, \T^{k}_B$ contains $4+2(k-2) = 2k$ copies of the original $\T_a, \T_b$
trees. Next, observe that an agreement forest for $\T_a, \T_b$ with 3 components can be obtained by placing taxon $3$ and taxon $6$ each in a singleton component, and $\{1,2,4,5\}$
in the remaining component. To obtain an agreement forest for $\T^{k}_A, \T^{k}_B$ we
put all copies of taxon $3$ and all copies of taxon $6$ in singleton components, yielding
$4k$ singleton components. All remaining taxa can be placed in one large component, yielding
$4k+1$ components in total.  
\end{proof}

Finally, we consider the following auxiliary observation, which will be useful later.

\begin{observation}
\label{obs:sameroot}
For each $k \geq 2$, there exists an optimal character $f^k$ on $\T^{k}_A, \T^{k}_B$ such
that $f^{k}$ has 3 states, and there exist optimal extensions of $f^k$ to both trees, such that the roots of  $\T^{k}_A, \T^{k}_B$ both receive the same state.
\end{observation}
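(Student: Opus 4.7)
The plan is to exhibit a concrete character $f^k$ obtained by pasting the base character $f_{asym}=GAACCG$ from Figure \ref{fig:binaryAntisymunbounded} into every one of the $2k$ copies of $\T_a$ (respectively $\T_b$) that make up $\T^k_A$ (respectively $\T^k_B$); concretely, the $i$-th taxon of each copy will inherit state $f_{asym}(i)$. Because $f_{asym}$ uses only the three states $\{A,C,G\}$, this immediately discharges the 3-state requirement.

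Next I would run a Fitch-style bookkeeping argument along the caterpillar backbone defining $\T^k_A$ and $\T^k_B$. The key input is the footnote attached to $f_{asym}$: under $f_{asym}$ there are optimal extensions of both $\T_a$ and $\T_b$ whose root receives state $G$, i.e.\ the Fitch bottom-up set at the root of each single copy contains $G$. Applying this at the two-copy join that defines $\T_A$, the bottom-up sets at the two children of the new root both contain $G$, so Fitch charges no mutation to either of the two new root-edges and $G$ persists in the Fitch set at the root of $\T_A$; the analogous statement holds for $\T_B$. Iterating along the inductive construction $\T^k_A = (\T^{k-1}_A, \T_A)$ (and similarly on the $B$ side), I can propagate state $G$ all the way to the top without incurring a single mutation on any backbone edge. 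This yields $l_{f^k}(\T^k_A) = k \cdot l_F(\T_A) = 4k$ and $l_{f^k}(\T^k_B) = k \cdot l_F(\T_B) = 8k$, so $f^k$ realises a gap of exactly $4k$; by Lemma \ref{lem:gapdist} this matches $d_{MP}(\T^k_A, \T^k_B)$, and hence $f^k$ is optimal. Finally, since $G$ sits in the Fitch bottom-up set at the root of both trees, the top-down phase is free to allocate $G$ to both roots, producing optimal extensions that agree on the root state.

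The only mildly delicate step, which I expect to be the main (but routine) obstacle, is confirming that $G$ genuinely survives the Fitch bottom-up at every level of the caterpillar. This reduces to a short induction on $k$ whose base case is exactly the footnote's observation about $f_{asym}$ and whose inductive step is simply that intersecting two Fitch sets both containing $G$ still yields a set containing $G$. Once that induction is in hand, every other ingredient is mechanical and the observation follows.
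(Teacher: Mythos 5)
Your proposal is correct and follows essentially the same route as the paper: both take $2k$ disjoint copies of $f_{asym}=GAACCG$, observe that $G$ survives the Fitch bottom-up sets at every join of the caterpillar backbone (so no backbone mutations are incurred and both roots may be coloured $G$), and conclude optimality from the fact that the resulting gap matches $d_{MP}(\T^k_A,\T^k_B)$. The only cosmetic difference is that you justify optimality by comparing the explicit count $8k-4k=4k$ against Lemma~\ref{lem:gapdist}, whereas the paper appeals to the optimality of $f_{asym}$ together with Claim~\ref{claim:gap2}; your variant is, if anything, slightly more explicit.
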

\begin{proof}
As noted earlier,  $f_{asym}=GAACCG$ is an optimal character for $\T_a, \T_b$ and
permits optimal extensions such that the roots of both trees can be assigned state $G$.
We can obtain an optimal character $f^k$ on $\T^{k}_A, \T^{k}_B$ simply by making
$2k$ copies of $f_{asym}$. The optimality of $f^{k}$ follows from the fact that
$f_{asym}$ is optimal for $\T_a, \T_b$ and that Claim \ref{claim:gap2} holds for any optimal character. Given that each copy of $\T_a$ and $\T_b$ can have state $G$ allocated to its root, it
follows (by continuing the bottom-up phase of Fitch on the remainder of $\T^{k}_A$ and  $\T^{k}_B$) that there exist optimal extensions of $f^k$ such that the roots of $\T^{k}_A$ and $\T^{k}_B$ are both allocated state $G$. 
\end{proof}

\subsection{The reduction}
\label{subsec:reduc}

In this section we exclusively consider simple undirected graphs.
Recall that a graph $G=(V,E)$ is \emph{cubic} if every vertex has degree exactly 3, in which case
$|E| = 3|V|/2$. A \emph{proper edge colouring} of a graph $G$ is an assignment of colours to the edges such that no two adjacent edges have the same colour, where two edges are adjacent if they have a common endpoint. Let $\chi'(G)$, the \emph{chromatic index} of $G$, be the minimum number of colours required
to properly colour the edges of $G$. The classical result of Vizing (see any standard graph-theory text, such as \citep{diestel2005}) states that for every graph $G$, $\Delta(G) \leq \chi'(G) \leq \Delta(G) + 1$ where $\Delta(G)$ is the maximum degree of
a vertex in $G$. Hence, for cubic $G$, $\chi'(G) \in \{3, 4\}$. Even for
cubic graphs it is NP-hard to distinguish between these two possibilities \citep{holyer1981np}. 

\begin{theorem}
\label{thm:hard}
Computation of $d_{MP}(\T_1, \T_2)$ on binary trees is NP-hard.
\end{theorem}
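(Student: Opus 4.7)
The plan is to reduce from the NP-hard problem of deciding whether a cubic graph $G=(V,E)$ satisfies $\chi'(G)=3$. Given such $G$, I would build two binary trees $\T_1=\T_E$ and $\T_2=\T_V$ on a common taxon set consisting of $2|E|=3|V|$ ``half-edge'' taxa (one per edge-endpoint incidence) together with the taxa of the symmetry-breaking trees $\T^{k}_A$ and $\T^{k}_B$ from Section \ref{subsec:symbreak}, for $k$ chosen to be a sufficiently large polynomial in $|V|$. In $\T_E$ the two half-edges of each edge $e$ form a cherry, the resulting $|E|$ cherries are threaded along a binary caterpillar backbone, and the whole is joined to $\T^{k}_A$ through a new common root. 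The tree $\T_V$ is built analogously: for each vertex $v$ its three incident half-edges are assembled into a rooted binary triplet, the $|V|$ triplets are threaded along a caterpillar backbone, and the result is joined to $\T^{k}_B$ through a new root.

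With this construction, the content of the argument is that an optimal character $f$ essentially picks a colour for every edge of $G$, and $l_f(\T_V)-l_f(\T_E)$ is maximised precisely when the induced edge-colouring is proper and uses only three colours. First I would invoke Lemmas \ref{lem:gap} and \ref{lem:gapdist} with $k$ chosen so that the $k+1$ gap comfortably exceeds the $O(|V|)$ that the graph-side part could contribute in the wrong direction; this forces every optimal $f$ to satisfy $l_f(\T_E)<l_f(\T_V)$, and then Lemma \ref{lem:monochrome} makes every cherry of $\T_E$ monochromatic. Consequently the two half-edges of any edge $e$ receive a common state, so $f$ determines a well-defined edge-colouring $\chi_f$ of $G$.

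Next I would count mutations locally. In $\T_E$ the $|E|$ edge cherries each contribute $0$ mutations, and using Observation \ref{obs:sameroot} the backbone states and the linking edge to $\T^{k}_A$ can be aligned with the state the symmetry-breaking gadget assigns to its root, so they contribute $0$ as well. In $\T_V$ each vertex triplet contributes exactly $|\{\chi_f(e):e\ni v\}|-1 \in \{0,1,2\}$ mutations on its internal edges under any optimal Fitch extension, so the total graph-side contribution in $\T_V$ is at most $2|V|$, with equality iff $\chi_f$ is a proper $3$-edge-colouring. Combining this with Lemma \ref{lem:gapdist} yields $d_{MP}(\T_E,\T_V)=4k+2|V|$ when $\chi'(G)=3$ and strictly less otherwise, giving the desired polynomial-time reduction.

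The main obstacle will be the ``leakage'' between the graph part and the symmetry-breaking part, and the bookkeeping of extra mutations that could appear on the two caterpillar backbones or on the new root edges: a cleverly chosen non-3-colourable character might try to shift some advantage from $\T_V$ onto $\T_E$ via the backbones and partially compensate for the loss at a bichromatic vertex. I would handle this by adapting the Fitch-based cancellation argument from the proof of Claim \ref{claim:gap2} to show that, up to an additive constant absorbed into the choice of $k$, the backbone and cross-root contributions cancel between $\T_E$ and $\T_V$, so the graph part is the only genuine source of variation. Since the resulting trees are binary and the reduction is polynomial, this establishes NP-hardness of computing $d_{MP}(\T_1,\T_2)$ on binary trees.
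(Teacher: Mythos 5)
Your high-level strategy (reduce from deciding $\chi'(G)=3$ for a cubic graph $G$, use the $\T^{k}_A,\T^{k}_B$ gadget to force $l_f(\T_E)<l_f(\T_V)$ for optimal $f$, make the cherries of $\T_E$ monochromatic so that $f$ induces an edge colouring, and let vertex triplets in $\T_V$ reward properness) is the same as the paper's, but your construction is too weak at the decisive point: it cannot distinguish $\chi'(G)=3$ from $\chi'(G)=4$. Your own accounting shows that the vertex triplets in $\T_V$ contribute the maximum $2|V|$ if and only if the induced colouring is \emph{proper}; the number of colours used is irrelevant, and by Vizing every cubic graph admits a proper $4$-edge-colouring. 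So a $\chi'(G)=4$ instance achieves exactly the same triplet contribution as a $\chi'(G)=3$ instance, and the only conceivable penalty for a fourth colour in your construction is on the caterpillar backbone of $\T_E$, which is at most one extra mutation and not even reliably incurred. This is precisely why the paper makes \emph{three} copies of every edge, forces the three copies of each edge to carry pairwise distinct colours (Property 5), and groups the nine taxa of each edge into one subtree of $T^{***}$: then $l_f(\T_E)$ attains its minimum $1+2|E|+l_{f^{S}}(S_E)$ only if all $|E|$ edge-subtrees share a common colour, and via the cyclic map $F$ that happens exactly when the underlying proper colouring uses only three colours. Some device of this kind is indispensable; without it the reduction only answers ``does $G$ admit a proper edge colouring?'', which is trivially yes.

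Second, the backbone ``leakage'' you defer to a cancellation argument is not an additive constant that can be absorbed into the choice of $k$. In your $\T_E$ the $|E|$ monochromatic edge-cherries hang off a caterpillar in an arbitrary fixed order, and their Fitch contribution on that backbone ranges from $0$ to $\Theta(|E|)$ depending on how the colour classes interleave with the caterpillar order; since the character is the optimisation variable, an optimal $f$ may well prefer an improper but backbone-contiguous colouring over a proper alternating one, and enlarging $k$ does nothing to control this (it only rigidifies the gadget's own, character-independent, contribution). The paper neutralises exactly this effect with machinery you have no analogue of: the tree $B$ and the unique colour \emph{pink} (Properties 2--3), which turn the vertex gadgets of $\T_V$ into independent \emph{below pink} subtrees each contributing exactly $3$ mutations, and the nested triplet structure of $T^{***}$, whose internal contribution in $\T_E$ is pinned to exactly $2|E|$ at optimality. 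You would need to add comparable isolation gadgets, together with the edge-tripling device, before the mutation counts in your final paragraph become valid.
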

\begin{proof}
Let $G = (V,E)$ be a cubic graph where $n=|V|$. We give a polynomial-time reduction from computation
of $\chi'(G)$ to computation of $d_{MP}$, from which NP-hardness will follow. Specifically,
we will construct two trees $\T_E$ and $\T_V$ such that, for a certain integer $P$, $d_{MP}(\T_E, \T_V) = P$ if and only if $\chi'(G) = 3$. In particular, if $\chi'(G) = 4$, then $d_{MP}(\T_E, \T_V)$ will be $P-1$ (or less). An important difference with
Theorem 4.6 of \citep{fischer2014maximum} is that here optimal characters $f$ will be engineered to always have
the property $l_f(\T_E) < l_f(\T_V)$ and not the
other way round. Informally, at optimality $\T_E$ always ``wins''.

The high-level idea is that in $\T_E$ we will choose the colours of the edges of $G$. In fact,
for each edge we will choose three colours, all different, representing the colour of $e$ in three different copies of $G$. Due
to the way we construct the two trees, there will exist optimal characters in which
the edge colouring (in each of the three copies of $G$) is \emph{proper}. This is because, the closer an edge colouring is to being
proper, the higher the parsimony score induced in $\T_V$. Within the space of proper
edge colourings, we will show that it is advantageous to use as few colours as possible, because
this will give the character a low parsimony score on $\T_E$. Leveraging the fact that
the colours used for the three copies of each edge are all different, we will derive the conclusion
that $d_{MP}$ can reach a certain value $P$ if and only if there is a proper edge colouring that uses only 3 colours i.e. that $\chi'(G)=3$.
We will prove the following:

\begin{align*}
&\chi'(G)=3 \Rightarrow d_{MP}(\T_E, \T_V) = P\\
&\chi'(G)=4 \Rightarrow d_{MP}(\T_E, \T_V) \leq P - 1
\end{align*}

Let $M$ be a large integer, at most polynomially large in $n$, whose value we will specify later. Letting $k=M$, construct $\T^{k}_{A}, \T^{k}_{B}$ (as described in the previous section). Relabel $S_E = \T^{k}_{A}$ and $S_V = \T^{k}_{B}$. By Lemma \ref{lem:gap}, $gap(S_E, S_V) \geq M+1$.

The core ingredients of $\T_E$ are the subtrees $B$, $S_E$ and $\T^{***}$. We construct $B$ by taking an arbitrary rooted binary tree on $3|V|+|E|$ taxa. By appending an extra taxon $\alpha$ just above its root, we create a new root yielding $3|V|+|E| + 1$ taxa in total. Note that since $\alpha$ is not a taxon of $B$, in the following we refer to $B$ including $\alpha$ or $B$ without $\alpha$ to stress whether or not $\alpha$ is considered together with $B$ or not.

Tree $T^{***}$ is constructed as follows. Fix an arbitrary rooted binary tree $T^{*}$ on $|E|$ leaves, identifying the leaves with elements of $E$. Replace
each leaf $u_e$ of $T^{*}$, where $e \in E$, with a rooted triplet to obtain $T^{**}$ on $3|E|$ leaves $u_{e,j}$ where $e \in E$ and $j \in \{1,2,3\}$. Finally, replace each leaf $u_{e,j}$  of $T^{**}$ with a rooted triplet on three taxa $x_{e,j}^{u}, x_{e,j}^{v}$ and $x_{e,j}^{edge}$ where $u, v \in V$
are the two endpoints of $e$. We ensure that $x_{e,j}^{u}, x_{e,j}^{v}$ are sibling to each other (i.e. form a cherry). This is $T^{***}$, which is depicted in Figure \ref{fig:Ttriplestar}, and it has thus $9|E|$ taxa.

\begin{figure}[ht] 
\centering\vspace{0.5cm}
   \scalebox{.8}{ \includegraphics[width=14cm]{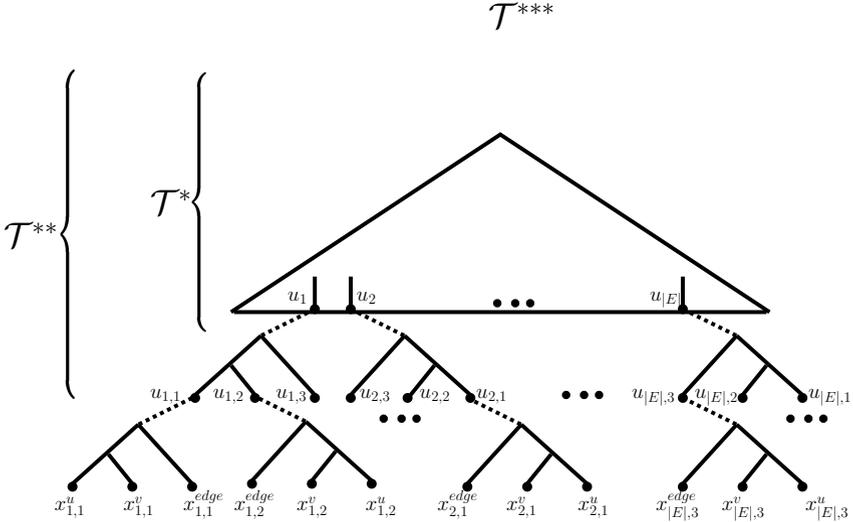}}
\caption{The tree $T^{***}$. Here we have identified $E$ with the set $\{1,\ldots,|E|\}$ to simplify the figure. In the lowermost leaves
we have overloaded $u$ and $v$: in each case they refer to the two endpoints of the edge in question. } \label{fig:Ttriplestar}
\end{figure}

The basic idea is that each edge $e = \{u,v\}$ occurs $3$ times in total, and each such occurrence consists of a cherry representing $u$ and $v$, and an extra taxon (``\emph{edge}'' ) sitting just above the cherry.

The construction of $\T_E$ is concluded by joining $B$ including $\alpha$, $S_E$ and $T^{***}$
 as shown
in Figure \ref{fig:binaryTEbeep}, which also introduces auxiliary taxa $\beta_1, \beta_2, \gamma_1, \gamma_2$. We adopt the labels used in that figure. Summarizing,
$\T_E$ contains
\[
 3|V|+|E| + 1 + 4 + 9|E| + 12M
\]
taxa.

\begin{figure}[ht] 
\centering\vspace{0.5cm}
    \scalebox{.8}{ \includegraphics[width=14cm]{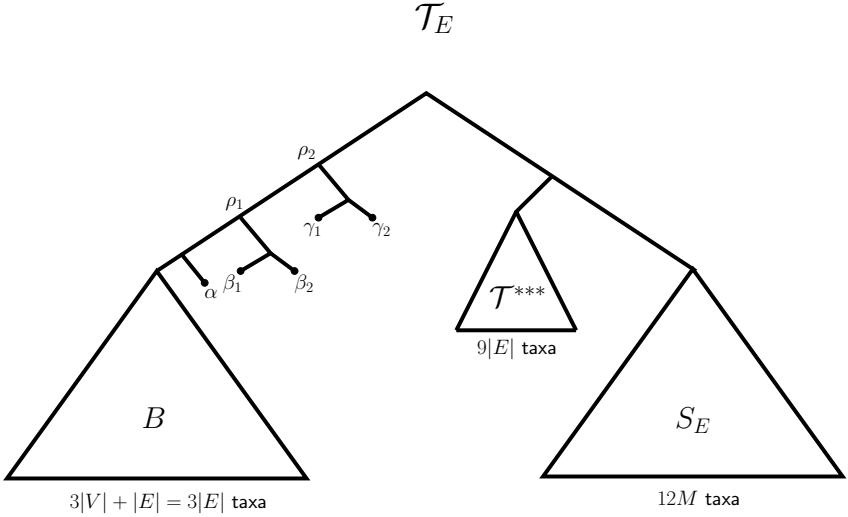}}
\caption{ The tree $\T_E$. Taxon $\alpha$ is closely linked to subtree $B$ as it is descending from the same root as $B$ in $\T_E$. This root is considered in the proof. However, in $\T_V$, $\alpha$ shares a direct common root with $S_V$, not $B$. }\label{fig:binaryTEbeep}
\end{figure}

To construct $\T_V$ we start by taking $B$ and attaching $S_V$ on the edge entering
taxon $\alpha$. Now, let 
\[
H = \{ (v, j) |  v \in V, j \in \{1,2,3\} \} \cup
 \{ e |  e \in E \}.
\]
Clearly, $|H| = 3|V|+|E|$. Pick an arbitrary bijection between the taxa of $B$ (excluding $\alpha$) and the elements of $H$. For each edge $e \in H$, introduce a rooted triplet
on the three taxa $x_{e,1}^{edge}$, $x_{e,2}^{edge}$, $x_{e,3}^{edge}$ and attach this rooted
triplet on the edge entering the taxon of $B$ corresponding to $e$.  For each
tuple $(v,j) \in H$, let $\{e, e^{*},e^{**}\}$ be the 3 edges incident to $v$ in $G$,
introduce a rooted triplet on the three taxa $x_{e,j}^{v}, x_{e^{*},j}^{v}$
and $x_{e^{**},j}^{v}$, and attach this rooted triplet on the edge entering the
taxon of $B$ corresponding to $(v,j)$.  Finally, we introduce a new root and join
$B$ to the new subtree on $((\beta_1, \gamma_1),(\beta_2, \gamma_2))$.
This completes the construction of $\T_V$, which is depicted in Figure \ref{fig:TVunbounded}.

\begin{figure}[ht] 
\centering
    \scalebox{.8}{ \includegraphics[width=14cm]{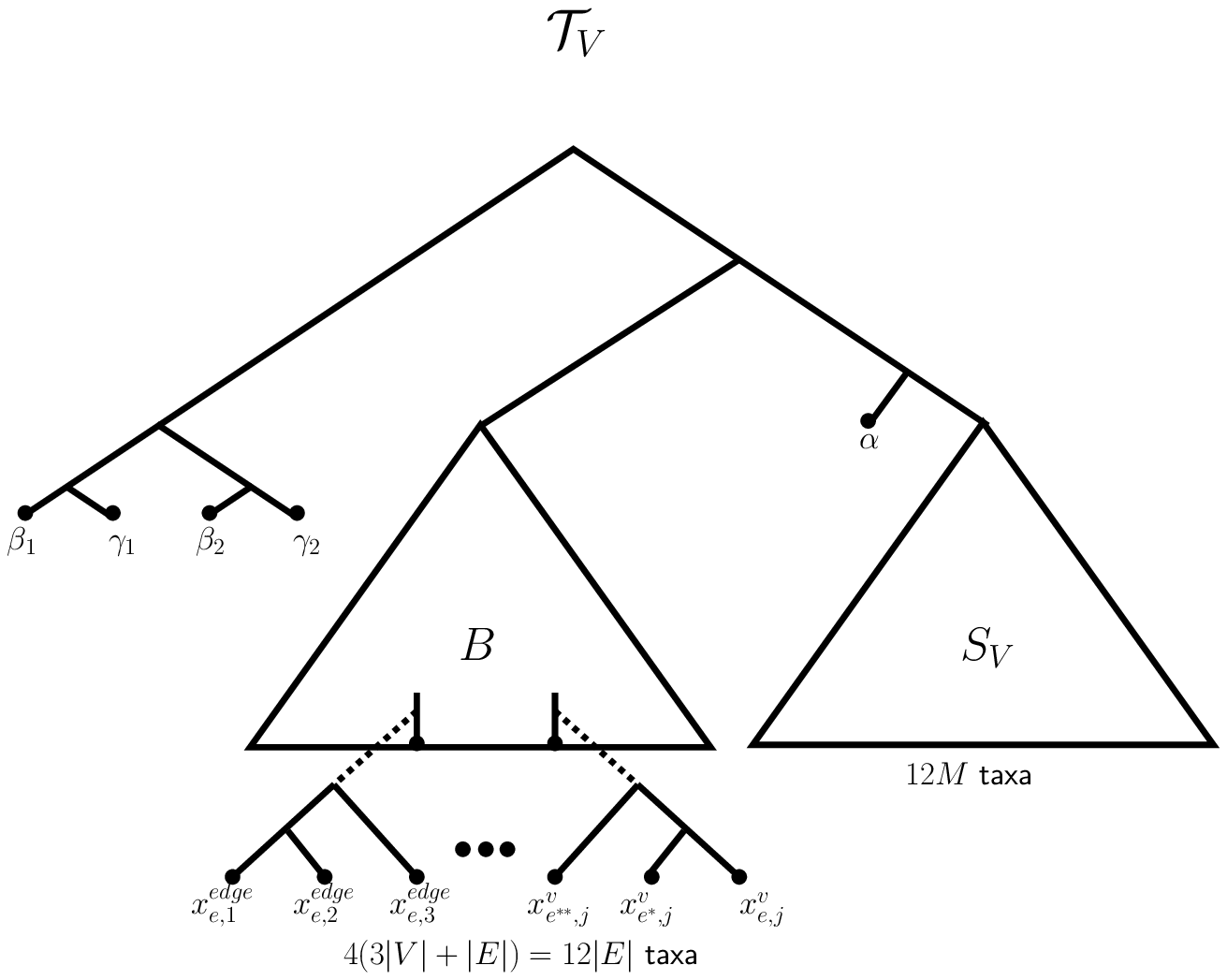}}
\caption{ The tree $\T_V$.  }\label{fig:TVunbounded}
\end{figure}
 
We are now in a position to specify the number $M$. We require $M$ to be sufficiently
large that, for every optimal character $f$, $l_f( \T_E ) < l_f( \T_V )$. From
Lemma \ref{lem:gapdist} we know that there exists some character $f'$ such
that $l_{f'}( \T_V ) - l_{f'}(\T_E) \geq 4M$. (In particular, we can obtain such a character
by -- for example -- extending the character suggested by Lemma \ref{lem:gapdist} such that all
taxa outside $S_E$ and $S_V$ are assigned the same state.) Now, let $t$ be the number
of edges in $\T_E$ that lie outside $S_E$. For every character $f$ we have
\[
l_f( \T_E ) - l_f(\T_V) \leq t + (4M - (M+1)).
\]
The $4M$ term is obtained from Lemma \ref{lem:gapdist}, the $(M+1)$ term from Lemma \ref{lem:gap}, and the $t$ term arises (pessimistically) from the situation when every
edge in $\T_E$ (outside $S_E$) incurs a mutation, but no edge in $\T_V$ (outside $S_V$)
incurs a mutation. So, if we choose $M$ such that
\[
t + 4M - (M+1) < 4M
\]
it follows that for \emph{every} optimal character $f$, $d_{MP}(\T_V, \T_E) = l_f(\T_V) - l_f(\T_E)$ and in particular $l_f(\T_V) > l_f(\T_E)$. Choosing $M = t$ is therefore sufficient
to achieve this. This ``symmetry breaking'' has far-reaching consequences which we shall heavily utilize later.

Next, let $f^{S}$ be any 3-state character on the taxa in $S_E$ and $S_V$ such that
$l_{f^S}(\T_V) - l_{f^S}(\T_E) = 4M = d_{MP}(S_E, S_V)$. This character exists and can be constructed in polynomial time thanks to Observation \ref{obs:sameroot}. Recall, in particular, that it is constructed by making many disjoint copies of the character $f_{asym} = GAACCG$.

Now, suppose $\chi'(G) = 3$. We will extend $f^{S}$ to all the
taxa in $\T_E$ as follows, obtaining a 4-state character. Take any proper edge colouring $Col$ of graph $G$ using three colours \emph{red, blue} and \emph{green}. We start by relabelling $f^{S}$ as follows: character state $G$ maps to \emph{blue},
$A$ maps to \emph{red} and $C$ maps to \emph{green}. Next, colour all the taxa in $B$ including $\alpha$
\emph{pink}. Colour the cherry $\{ \beta_1, \beta_2 \}$ \emph{pink} and the
cherry $\{ \gamma_1, \gamma_2 \}$ \emph{blue}.

 Next, consider the following cyclical mapping $F$:
\begin{align*}
F(red,1) &\rightarrow red\\
F(red,2) &\rightarrow blue\\
F(red,3) &\rightarrow green\\
F(blue,1) &\rightarrow blue\\
F(blue,2) &\rightarrow green\\
F(blue,3) &\rightarrow red\\
F(green,1) &\rightarrow green\\
F(green,2) &\rightarrow red\\
F(green,3) &\rightarrow blue\\
\end{align*}
Now, for every $e \in E$, $j \in \{1,2,3\}$
and letting $e = \{u,v\}$, we assign 
$x_{e,j}^{u}, x_{e,j}^{v}$ and $x_{e,j}^{edge}$ all the same colour: the colour
$F( Col(e), j)$ where as usual $Col(e)$ is the colour assigned to $e$ by the proper edge colouring $Col$. 

Let this character be called $f^{Col}$.  Observe,
\[
l_{f^{Col}}(\T_E) =  1 + 2|E| + l_{f^{S}}(S_E)
\]
This can be confirmed by applying Fitch's algorithm. Note, in particular, that there is an optimal
extension such that all the internal nodes of the tree $B$ (including $\alpha$) are coloured \emph{pink},  all the
nodes of the $T^{*}$ part of $T^{***}$ are \emph{blue}, $p_2$ is \emph{blue}, the root is \emph{blue}, and all the unlabelled nodes are \emph{blue}. The $+1$
is then the mutation in the transition from \emph{pink} to \emph{blue} on, for example,
the edge between the cherries $\{ \beta_1, \beta_2 \}$ and $\{ \gamma_1, \gamma_2 \}$.
There is no
mutation on the edge entering the root of $S_E$ because, by Observation  \ref{obs:sameroot}
and the way we relabelled $f^{S}$, there is an optimal extension of $S_E$ in which its
root is permitted to be \emph{blue}.

\noindent
Also,
\[
l_{f^{Col}}(\T_V) =  2 + 3(3|V|+|E|) + 1 +  l_{f^{S}}(S_V)
\]
The $+1$ term here is definitely incurred because there is an optimal extension in which the root of $\T_V$ and $\alpha$ are both coloured \emph{pink}, but \emph{pink} is not used in $f^{S}$,
so there will then be a mutation on the edge entering the root of $S_V$. The $+2$ term
corresponds to mutations incurred in the $\beta_1, \beta_2, \gamma_1, \gamma_2$ region of $\T_V$.

Now, define $P$ as follows:
\begin{align*}
P & = l_{f^{Col}}(\T_V) - l_{f^{Col}}(\T_E)\\
&= \bigg ( 2 + 3(3|V|+|E|) + 1 +  l_{f^{S}}(S_V) \bigg ) - \bigg ( 1 + 2|E| + l_{f^{S}}(S_E) \bigg )\\
& = \bigg ( 2 + 3(3|V|+|E|) + 1  \bigg ) - \bigg ( 1 + 2|E|  \bigg ) + 4M
\end{align*}

Hence, if $\chi'(G)=3$, $d_{MP}(\T_E, \T_V) \geq P$. We still need to show
(1) $d_{MP}(\T_E, \T_V) \leq P$ and (2) $d_{MP}(\T_E, \T_V) = P$ if and only if $\chi'(G)=3$. Once these facts have been established NP-hardness will follow.

We approach this by starting from an arbitrary optimal character $f$ and then transforming
$f$ step by step such that we do not lose optimality but the character attains a certain canonical form. This canonical form will be attained by accumulating one special property at a time. In all
cases the argument that a new property can be obtained, is based on the assumption
that all earlier properties have already been accumulated. Properties are never lost, and each property can
be attained in polynomial time. Thus, given an arbitrary optimal character we can transform it in polynomial time
into a character that has all the described properties. The proofs of the following properties can be found in the Appendix (unless stated here).

\noindent
\emph{\textbf{Property 1.} All cherries in $\T_E$ are monochromatic.
That is, if $\{x,y\}$ 
are two taxa that share a parent in $\T_E$, then both are assigned the same colour (i.e. state).}\\
\\
\noindent
\emph{Proof.} This is possible by combining Lemma \ref{lem:monochrome} with the fact (established earlier)
that, for every optimal character $f$, $l_f(\T_E) < l_f(\T_V)$.\\
\\
\noindent
\emph{\textbf{Property 2.1.} In $\T_E$, the cherry $\{ \beta_1, \beta_2 \}$ has a different
colour to the cherry $\{ \gamma_1, \gamma_2 \}$.}\\
\\
\noindent
\noindent
\emph{\textbf{Property 2.2.} In $\T_E$, the (possibly multiple) colours used for the taxa of $B$ (including $\alpha$) are not used
elsewhere in $\T_E$, except possibly $\{ \beta_1, \beta_2 \}$. }\\
\\
\noindent
\emph{\textbf{Property 2.3} In $\T_E$, all the taxa in $B$ have the same colour which, with
the possible exception of $\beta_1,\beta_2$, does not appear on taxa outside $B$ and $\alpha$.}\\
\\
\emph{\textbf{Property 3.} In $\T_E$, all the taxa in $B$ have the same colour, and cherry
$\{\beta_1, \beta_2\}$ also has this colour. Moreover this colour does not appear on any
other taxa i.e. it is unique for $B$ (including $\alpha$) and $\beta_1, \beta_2$.}\\
\\
\noindent
From now on we refer to the unique colour used by $B$ (including $\alpha$),
$\beta_1$ and $\beta_2$ as \emph{pink}. Property 3 is extremely important. 
In particular, it means that from now on we can assume the existence of optimal extensions of $\T_V$ such that the root of $\T_V$ is coloured \emph{pink} and, moreover, that the entire image of $B$ inside $\T_V$ is coloured \emph{pink}. We call these \emph{pink extensions}. These greatly simplifies the task of counting mutations inside $\T_V$. In particular, it means that we from now on (in $\T_V$) only need to consider mutations incurred \emph{inside} the subtrees sibling to the taxa of $B$, which we call \emph{below pink} subtrees. These subtrees
never contain \emph{pink} taxa.\\

\emph{\textbf{Property 4}. Let $f$ be an optimal character with properties 1--3 and let $f^{*}$ be the
restriction of $f$ to the taxa in $S_V$ and $S_E$. Then $l_{f^*}(S_V) - l_{f^*}(S_E) = d_{MP}(S_E, S_V)$.}\\
\\
\noindent
\emph{Proof. } Fix a \emph{pink extension} of $f$. From the earlier properties, $f^{*}$ does not contain any \emph{pink} taxa. Now, taxon $\alpha$ is coloured \emph{pink}, because $\alpha$ is a taxon of $B$. This means that, in $\T_V$, there is unavoidably a mutation on the edge entering the
root of $S_V$.  Moreover, we know that there exist optimal characters for $S_E, S_V$ in which the roots of $S_E$ and $S_V$ can be allocated the same colour \emph{blue}: this is the 3-state character $f^{S}$ that we constructed at the start of the proof.  This means that, without loss of
optimality, we can assume $f^{*} = f^{S}$, where we are free to (and should) relabel the \emph{blue} inside
$f^{S}$ such that in $\T_E$ \emph{no} mutation is incurred on the edge entering the root
of $S_E$. (This can be achieved by running the bottom-up phase of Fitch on the subtree
sibling to $S_E$ in $\T_E$, identifying the set of colours permitted by Fitch at the root of the subtree, and arbitrarily picking one of those colours as the relabelling colour). Optimality is
assured because (1) $l_{f^S}(S_V) - l_{f^S}(S_E) = d_{MP}(S_E, S_V)$,  (2) we force
a mutation at the root of $S_V$ and (3) we definitely avoid a mutation at the root of $S_E$.\\
\\
\noindent
\emph{\textbf{Property 5}. (a) For every edge $e = \{u,v\} \in E$  the
three taxa $x_{e,1}^{edge}$, $x_{e,2}^{edge}$, $x_{e,3}^{edge}$ all have distinct colours. Moreover, (b) $x_{e,1}^{edge}, x_{e,1}^{u}, x_{e,1}^{v}$ all
have the same colour, $x_{e,2}^{edge}, x_{e,2}^{u}, x_{e,2}^{v}$ all
have the same colour, and finally $x_{e,3}^{edge}, x_{e,3}^{u}, x_{e,3}^{v}$ all
have the same colour.}\\
\\
\noindent
\emph{Proof.} First, suppose for some $e \in E$ there exists
$j, j' \in \{ 1,2,3\}$ such that $j \neq j'$ and $x_{e,j}^{edge}$, $x_{e,j'}^{edge}$
have the same colour. Observe that  $x_{e,1}^{edge}$, $x_{e,2}^{edge}$, $x_{e,3}^{edge}$ all form a single \emph{below pink} subtree in $\T_V$. Suppose
we recolour $x_{e,j}^{edge}$ to some brand new colour. This raises the parsimony
score of $\T_E$ by at most 1. However, it also raises the parsimony score  of $\T_V$ by
at least one, due to the introduction of a new colour into the corresponding \emph{below pink}
subtree. Hence the recoloured character is optimal. We can repeat this as long as necessary
to ensure that (a) eventually holds.  Now, suppose for some $e \in E$  and $j \in \{1, 2, 3\}$ the taxa $x_{e,j}^{edge}, x_{e,j}^{u}, x_{e,j}^{v}$ do not all have the same colour. By Property 1 we know that $x_{e,j}^{u}$ and $x_{e,j}^{v}$ have
the same colour, because they form a cherry in $\T_E$.
We recolour all 3 taxa with a 
brand new colour. This cannot raise the parsimony score of $\T_E$. On
the other hand, it cannot lower the parsimony score of $\T_V$, because
the three now uniquely coloured taxa all occur in different \emph{below pink} subtrees of $\T_V$. Hence the recoloured character is optimal, and (a) is still holding. We repeat this as long
as necessary to ensure that (b) eventually also holds.\\
\\
\noindent
\emph{\textbf{Property 6}. For every $j \in \{1, 2, 3\}$,
the edge colouring induced by the colours  of the $x_{e,j}^{edge}$ taxa ($e \in E$), is
a proper edge colouring.}\\
\\
Recall that, by Property 5, each $x_{e,j}^{edge}$ taxon has the same colour as the $x_{e,j}^{u}$ and $x_{e,j}^{v}$ taxa below it in $\T_E$. Suppose that there is some  $j \in \{1,2,3\}$ for which the induced edge colouring is not proper.
Then there exists some $u \in V$ and two
edges $e \neq e'$ in $E$ incident at $u$ such that $x_{e,j}^{u}$
and $x_{e',j}^{u}$ both have the same colour. Both these taxa are together
in a \emph{below pink} subtree of $\T_V$. This subtree therefore currently induces $m<2$ mutations (excluding the mutation as the subtree touches the \emph{pink} region). Now,
suppose we introduce a brand new colour and recolour $x_{e,j}^{u}, x_{e,j}^{v}$ and $x_{e,j}^{edge}$ with it. This raises the parsimony score of $\T_E$ by at most 1. However,
it definitely also raises the parsimony score of $\T_V$, by at least 1, because the aforementioned
\emph{below pink} subtree now induces $m+1$ mutations (due to the introduction of a new colour). Hence the new character is optimal, and all earlier properties are preserved. We can repeat this process until the induced edge colouring is proper.\\
\\
\noindent
\emph{\textbf{Property 7}. For an optimal character $f$,}
\[
l_f(\T_V) = 2 + 3(3|V|+|E|) + 1 +  l_{f^{S}}(S_V).
\]
\emph{Proof. } This is a consequence of the fact that (from Property 6) we can assume
that in $\T_V$ a proper edge colouring is induced, plus the fact that a \emph{pink extension} is 
an optimal extension. In particular, the proper edge colouring means that each
of the $3|V|+|E|$ \emph{below pink} subtrees induces 2 mutations on its internal
edges and a third mutation where the subtree touches the \emph{pink} region. The `2'
term corresponds to the fact that the two taxa $\gamma_1, \gamma_2$ are necessarily
not pink. The  '1' term is the mutation at the root of $S_V$.\\
\\
\noindent
\textbf{Central argument}\\
\\
\noindent
As a consequence of Property 7, optimal characters (which we always assume to induce proper edge colourings) are only distinguished by their ability to minimize the number of mutations induced in $\T_E$. We can already establish a strong lower bound for this number:
\[
l_f( \T_E ) \geq 1 + 2|E| + l_{f^{S}}(S_E)
\]
Every proper edge colouring induces (at least) these mutations in $\T_E$. The '1' term
is the mutation that occurs between the $\beta_1, \beta_2$ and $\gamma_1, \gamma_2$
taxa and the $2|E|$ term is a consequence of (amongst others) Property 5.

Hence, $d_{MP}(\T_E, \T_V) = l_f(\T_V) - l_f(\T_E) \leq P$, where $P$ is the value defined earlier in the proof.
We have already shown that, if $G$ has $\chi'(G)=3$, $P$ is possible. We now see
that this is optimal. The only thing we have left to show, is that if $\chi'(G) > 3$, that
$P$ is \emph{not} possible. We use the contrapositive to prove this. In particular, we
will show
\[
l_f( \T_E ) = 1 + 2|E| + l_{f^{S}}(S_E) \Rightarrow \chi'(G)=3.
\]
Suppose, then, that $l_f( \T_E ) = 1 + 2|E| + l_{f^{S}}(S_E)$. 
This means that there are no mutations in the subtree $T^{***}$ other than the $2|E|$ unavoidable mutations due to Property 5. To achieve this
it must be the case that all the $|E|$ subtrees (each containing 9 taxa) in $T^{***}$ all have a single
colour in common. Let us call this colour \emph{blue}. Hence, for every $e \in E$,
there exists exactly one $j \in \{1, 2, 3\}$ such that 
 $x_{e,j}^{edge}, x_{e,j}^{u}, x_{e,j}^{v}$ are all \emph{blue}. We now
build a proper 3-edge-colouring for $G$. If $j=1$, we assign $e$ the colour \emph{red}.
If $j=2$, we assign $e$ the colour \emph{blue}. If $j=3$, we assign $e$ the colour
\emph{green}. This must be a proper colouring: if it was not, then there would be some
vertex $u \in V$, two incident edges $e, e'$ incident to $u$, and some $j \in \{1,2,3\}$
such that  $x_{e,j}^{edge}$ and $x_{e',j}^{edge}$ were both \emph{blue}. But
this would contradict Property 6. Hence, $\chi'(G)=3$.\\
\\
This completes the proof. Summarising, for a given cubic graph $G=(V,E)$, $$\chi'(G)=3 \Leftrightarrow 
d_{MP}(\T_E, \T_V) = P,$$
from which the NP-hardness of computing $d_{MP}(\T_E, \T_V)$ on binary trees follows.

\end{proof}

\begin{corollary}
\label{cor:fixed}
For every fixed integer $i \geq 4$, computation of $d^{i}_{MP}$ on binary trees is
NP-hard.
\end{corollary}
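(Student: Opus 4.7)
The plan is to re-use verbatim the reduction from the proof of Theorem \ref{thm:hard}, with no change to the construction of $\T_E$, $\T_V$ or the threshold $P$. The only thing that needs to be checked is that the equivalence $d_{MP}(\T_E, \T_V) = P \Leftrightarrow \chi'(G) = 3$ continues to hold when characters are restricted to use at most $i$ states, as long as $i \geq 4$.

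For the forward direction, I would audit the character $f^{Col}$ exhibited in the $\chi'(G) = 3$ branch of the main proof. By inspection it uses exactly four colours: \emph{pink} on $B$, $\alpha$, $\beta_1, \beta_2$; \emph{blue} on $\gamma_1, \gamma_2$; and \emph{red}, \emph{blue}, \emph{green} on the taxa of $T^{***}$ via the cyclical mapping $F$. Hence $f^{Col}$ is already a valid $i$-state character for every $i \geq 4$ and witnesses $d^{i}_{MP}(\T_E, \T_V) \geq P$ whenever $\chi'(G) = 3$. For the other direction I would appeal to the trivial monotonicity $d^{i}_{MP}(\T_E, \T_V) \leq d_{MP}(\T_E, \T_V)$, which holds because every $i$-state character is also an unrestricted character; combined with the fact from Theorem \ref{thm:hard} that $\chi'(G) \neq 3$ forces $d_{MP}(\T_E, \T_V) \leq P - 1$, this gives $d^{i}_{MP}(\T_E, \T_V) \leq P - 1$ whenever $\chi'(G) = 4$. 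Therefore $\chi'(G) = 3$ if and only if $d^{i}_{MP}(\T_E, \T_V) = P$, and the NP-hardness of cubic edge $3$-colourability transfers to $d^{i}_{MP}$ on binary trees for every fixed $i \geq 4$.

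I do not expect any serious obstacle: the entire argument reduces to an audit of the state count of $f^{Col}$ together with the elementary inequality $d^{i}_{MP} \leq d_{MP}$. In particular, none of the delicate recolouring arguments inside the proof of Theorem \ref{thm:hard} that introduce fresh colours to boost the parsimony score in $\T_V$ need to be re-examined in the bounded-state setting, because they are only used to upper-bound $d_{MP}(\T_E, \T_V)$ in the unrestricted model, and that upper bound is inherited for free by $d^{i}_{MP}(\T_E, \T_V)$ via monotonicity.
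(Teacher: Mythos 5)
Your proposal is correct and is essentially the paper's own argument: the paper's proof is a one-line observation that the witness character achieving $P$ uses only $4$ states (the three edge colours plus \emph{pink}), with the monotonicity $d^{i}_{MP} \leq d_{MP}$ left implicit. You have simply spelled out both halves explicitly, which matches the intended reasoning exactly.
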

\begin{proof}
This is a consequence of the fact that in the theorem only 4 states are required to construct
a character achieving MP distance $P$. Namely, the 3 colours used in the proper edge colouring
of $G$, plus \emph{pink}.
\end{proof}

Note that the above proof cannot (obviously) be extended to give APX-hardness. By taking multiple copies of the tree $T^{***}$ it is possible to increase the gap between $\chi'(G)=3$ and $\chi'(G)=4$ instances to more than 1, but this is insufficient for APX-hardness.


\newpage

\section{Computation of $d^{2}_{MP}$ is NP-hard on binary trees.}

As in the previous section we first require a gadget that can break symmetry between
two trees.

\subsection{Symmetry breaking gadget in the case of 2 states}
\label{subsec:2statesymmetry}

Consider the two rooted trees
\[
T_a = (((5,(6,4)),3),((1,(8,2)),7))
\]
and
\[
T_b = (((7,((4,2),6)),3),(8,(1,5)))
\]

shown in Figure \ref{fig:binaryAntisym2states}.

\begin{figure}[ht]      \centering\vspace{0.5cm} 
      \includegraphics[width=10cm]{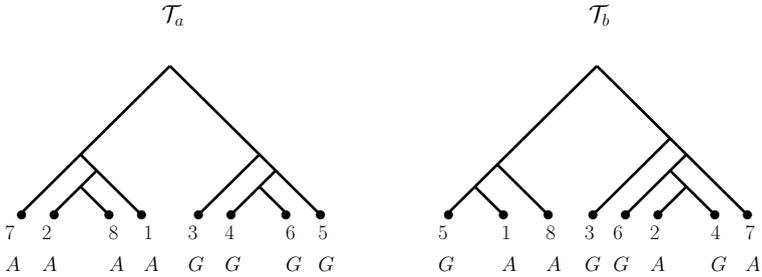}
\caption{Two trees $\T_a$and $\T_b$ that are ``asymmetric'' on
characters with at most 2 states. An example of an optimal character
is $f_{asym} = AAGGGGAA$. }
\label{fig:binaryAntisym2states}
\end{figure}

Here, it can be verified (e.g. by exhaustive search) that $d^{2}_{MP}(\T_a, \T_b)=3$, and the character $f_{asym} = AAGGGGAA$
can achieve this: $l_{f_{asym}}(\T_a)=1$ and $l_{f_{asym}}(\T_b)=4$.


In fact,
these trees are asymmetric, in the sense that for every optimal 2-state character $f$,
$l_f( \T_a ) < l_f( \T_b )$. In particular,
as can be verified by computational search (e.g. using the ILP formulation or performing an exhaustive search),
$\max_{f} ( l_f(\T_b) - l_f(\T_a)) = 3$ and $\max_{f} ( l_f(\T_a) - l_f(\T_b))=2$. 
(The second maximum is achieved by the character $AAGAAGGG$, for example.)
Using the same notation as in Section \ref{subsec:symbreak}, but
restricted to characters with at most 2 states, we therefore obtain:
\[
gap(\T_a, \T_b) = 1.
\]
From now on we implicitly assume that all characters have
at most 2 states.

Define $\T_A$ and $\T_B$ in the same way as in Section \ref{subsec:symbreak}. It can be verified that
$gap(\T_A, \T_B) \geq 1$. This is not yet strong enough for what we require, so let $\T_{AA}$
and $\T_{BB}$ be obtained by joining two copies of $\T_A$, and two copies of $\T_B$,
together (respectively). 

\begin{claim}
\label{clm:AA}
$gap(\T_{AA}, \T_{BB}) \geq 2.$
\end{claim}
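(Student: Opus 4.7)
The plan is to mirror the argument of Claim \ref{claim:smallgap} in the 2-state regime. I would begin by extending the 2-state character $f_{asym} = AAGGGGAA$ of Figure \ref{fig:binaryAntisym2states} in the natural way to a character $F$ on both $\T_{AA}$ and $\T_{BB}$, so that corresponding taxa across all four copies of $\T_a$ (respectively $\T_b$) receive the same state. A Fitch bottom-up pass verifies that with $f_{asym}$ the roots of both $\T_a$ and $\T_b$ have Fitch label set $\{A,G\}$. Propagating this fact upward, the roots of each copy of $\T_A$ and each copy of $\T_B$ also receive label set $\{A,G\}$, as do the roots of $\T_{AA}$ and $\T_{BB}$ themselves. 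Consequently no mutation is induced on any of the join edges introduced by the two doubling steps in either tree, so
\[
l_F(\T_{BB}) - l_F(\T_{AA}) = 4 \bigl( l_{f_{asym}}(\T_b) - l_{f_{asym}}(\T_a) \bigr) = 12,
\]
giving $\max_f(l_f(\T_{BB}) - l_f(\T_{AA})) \geq 12$.

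For the matching upper bound on $\max_f(l_f(\T_{AA}) - l_f(\T_{BB}))$, the standard doubling recurrence (as used in the proof of Claim \ref{claim:smallgap}) yields
\[
\max_f(l_f(\T_{AA}) - l_f(\T_{BB})) \leq 2 \max_f(l_f(\T_A) - l_f(\T_B)) + 1,
\]
the $+1$ accounting for a possible mutation at the new root of $\T_{AA}$ not paralleled in $\T_{BB}$. To close the claim it suffices to prove the right-hand side is at most $10$, since then combined with the lower bound we get $gap(\T_{AA}, \T_{BB}) \geq 12 - 10 = 2$.

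The main obstacle is precisely this upper bound. A purely analytic iteration starting from the base case $\max_f(l_f(\T_a) - l_f(\T_b)) = 2$ only yields $\max_f(l_f(\T_A) - l_f(\T_B)) \leq 5$ and hence $\max_f(l_f(\T_{AA}) - l_f(\T_{BB})) \leq 11$, one unit short of what is needed. To remove this slack I would appeal to the ILP of Section \ref{sec:ilp}, which in the 2-state regime comfortably handles binary trees with far more than the 16 (respectively 32) taxa of $\T_A, \T_B$ (respectively $\T_{AA}, \T_{BB}$). One then verifies computationally either the sharper bound $\max_f(l_f(\T_A) - l_f(\T_B)) \leq 4$, which tightens the recurrence to $\leq 9$, or the direct bound $\max_f(l_f(\T_{AA}) - l_f(\T_{BB})) \leq 10$. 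Either verification, combined with the Fitch-based lower bound, delivers the claim; this reliance on computational search parallels precisely how the analogous upper bounds in Claims \ref{claim:gap2} and \ref{claim:smallgap} are justified in the unbounded-state case.
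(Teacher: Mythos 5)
Your proposal is correct and follows essentially the same route as the paper: the lower bound of $12$ comes from four disjoint copies of $f_{asym}$ together with a Fitch argument showing no mutations on the join edges, and the upper bound $\max_f(l_f(\T_{AA}) - l_f(\T_{BB})) \leq 10$ is delegated to computational search (the paper uses exhaustive search and the ILP, and even exhibits an explicit character attaining $10$). The only minor difference is that you spell out the Fitch root-label propagation and explicitly note why the analytic recurrence falls one short, which the paper leaves implicit.
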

\begin{proof}
It can easily be checked that $\max_{f}( l_f(\T_{BB}) - l_f(\T_{AA})) \geq 12$. This can
be achieved, for example, by taking a character $f$ that comprises 4 disjoint copies
of $f_{asym}$, thus obtaining $l_f(\T_{BB}) = 16$ and $l_f(\T_{AA})=4$. (In fact, by performing an exhaustive search, one can show that this is optimal).  Verifying
that $\max_{f}( l_f(\T_{AA}) - l_f(\T_{BB})) = 10$ is more challenging. We have used an exhaustive search to check this, but note that our ILP gives the same result in significantly less time. In fact, $l_f(\T_{AA}) - l_f(\T_{BB}) = 10$ can be achieved by $f=AGAGAGGAAGGGAGGAGAAAGAAGAGGGAGGA$, for which the score on $\T_{AA}$ is 14 and the score on tree $\T_{BB}$ is 4. So, altogether we have
$\max_{f}( l_f(\T_{BB}) - l_f(\T_{AA})) = 12$ and
$\max_{f}( l_f(\T_{AA}) - l_f(\T_{BB})) = 10$, so $gap(\T_{AA}, \T_{BB}) \geq 2$ and
$d^{2}_{MP}(\T_{AA}, \T_{BB})=12$.
\end{proof}

Let $\T^{k}_{AA}$ be the rooted tree on $32k$ taxa obtained by arranging
$k$ disjoint copies of $\T_{AA}$ along a caterpillar backbone. That is, $\T^{1}_{AA} = \T_{AA}$ and
for $k>1$, $\T^{k}_{AA}$ is obtained by joining $\T^{k-1}_{AA}$ and $\T_{AA}$ via
a new root. $\T^{k}_{BB}$ is defined analogously.

\begin{lemma}
\label{lem:gap2state}
For $k\geq 1$, $gap(\T^{k}_{AA}, \T^{k}_{BB}) \geq k+1$.
\end{lemma}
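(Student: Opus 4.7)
The plan is to mimic closely the induction used in Lemma \ref{lem:gap}. Two ingredients need to be assembled: a lower-bound recurrence
\[
\max_f \bigl( l_f(\T^k_{BB}) - l_f(\T^k_{AA}) \bigr) \;\geq\; \max_f \bigl( l_f(\T^{k-1}_{BB}) - l_f(\T^{k-1}_{AA}) \bigr) + 12,
\]
and an upper-bound recurrence
\[
\max_f \bigl( l_f(\T^k_{AA}) - l_f(\T^k_{BB}) \bigr) \;\leq\; \max_f \bigl( l_f(\T^{k-1}_{AA}) - l_f(\T^{k-1}_{BB}) \bigr) + 10 + 1.
\]
The constants $12$ and $10$ are supplied by Claim \ref{clm:AA}, while the $+1$ in the upper bound is the familiar ``root-edge'' slack: joining two subtrees at a new root via the two incident edges can introduce at most one extra mutation on those edges in any given tree.

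To prove the lower-bound recurrence I would exhibit a specific character built from $4k$ disjoint copies of $f_{asym} = AAGGGGAA$, one copy placed on each of the $4k$ copies of $\T_a$ inside $\T^k_{AA}$, and identically on each of the $4k$ copies of $\T_b$ inside $\T^k_{BB}$. A direct Fitch bottom-up calculation shows that under $f_{asym}$ both the root of $\T_a$ and the root of $\T_b$ receive the Fitch label set $\{A,G\}$. Consequently, as one walks up the caterpillar backbone, every subroot encountered (in both trees) has Fitch set $\{A,G\}$, so every join step contributes zero extra mutations on its two root edges. This gives $l_f(\T^k_{BB}) - l_f(\T^k_{AA}) = 12k$ for this particular character, and hence the lower-bound recurrence with no additional loss. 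The upper-bound recurrence is immediate from the root-edge slack applied to each join.

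The induction on $k$ is then routine. For the base case $k=1$, Claim \ref{clm:AA} gives $gap(\T_{AA}, \T_{BB}) \geq 2 = k+1$, and moreover records that the $BB$-versus-$AA$ direction strictly dominates the reverse direction ($12$ against $10$). Subtracting the two recurrences shows that this dominating difference grows by at least $12 - 10 - 1 = 1$ per inductive step, so $gap(\T^k_{AA}, \T^k_{BB}) \geq k + 1$ for all $k \geq 1$. Since the dominating direction grows by $12$ per step while the reverse grows by at most $11$, the $BB$-direction remains dominant throughout, and the absolute value in the definition of $gap$ introduces no sign complications.

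The main obstacle is the Fitch bookkeeping that justifies the clean (loss-free) lower-bound recurrence, namely verifying that repeated use of $f_{asym}$ propagates the Fitch set $\{A,G\}$ up the backbone in both $\T^k_{AA}$ and $\T^k_{BB}$ so that no mutation is paid on any joining edge. Once this is in hand, the upper bound and the induction are immediate.
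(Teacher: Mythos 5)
Your proposal is correct and follows essentially the same route as the paper, which simply cites Claim \ref{clm:AA} for the base case and refers to the ``analogous arguments'' of Claim \ref{claim:smallgap} and Lemma \ref{lem:gap} for the inductive step; you have filled in exactly those details (the $+12$ lower-bound recurrence via the Fitch set $\{A,G\}$ propagating up the backbone, and the $+10+1$ upper-bound recurrence), and the arithmetic $12k - (11k-1) = k+1$ checks out.
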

\begin{proof}
The case $k=1$ is proven by Claim \ref{clm:AA} and for higher $k$ we use
analogous arguments to the proof of Claim \ref{claim:smallgap} and Lemma
\ref{lem:gap}. We omit details.
\end{proof}

\begin{lemma}
\label{lem:gapdist2state}
For $k \geq 1$, $d^{2}_{MP}(\T^{k}_{AA}, \T^{k}_{BB}) = 12k$.
\end{lemma}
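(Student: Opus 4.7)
The plan is to follow the template of Lemma \ref{lem:gapdist}: obtain the lower bound $d^{2}_{MP}(\T^{k}_{AA}, \T^{k}_{BB}) \geq 12k$ by exhibiting a $2$-state character which realises a one-sided difference of $12k$, and the matching upper bound via $d^{2}_{MP} \leq d_{MP} \leq d_{rSPR}$ together with an explicit agreement forest of size $12k+1$.

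For the lower bound, I construct a $2$-state character $f$ with $l_f(\T^{k}_{BB}) - l_f(\T^{k}_{AA}) \geq 12k$ by placing a copy of $f_{asym} = AAGGGGAA$ on each of the $4k$ disjoint copies of $\T_a, \T_b$ embedded in $\T^{k}_{AA}, \T^{k}_{BB}$. A short Fitch bottom-up computation confirms that under $f_{asym}$ the root Fitch set is $\{A, G\}$ in both $\T_a$ and $\T_b$, so no mutation is forced on any edge of the caterpillar backbone or on the edges joining copies into $\T_A, \T_B, \T_{AA}, \T_{BB}$. Each of the $4k$ copies then contributes independently $l_{f_{asym}}(\T_b) - l_{f_{asym}}(\T_a) = 4 - 1 = 3$, totalling $12k$; hence $d^{2}_{MP}(\T^{k}_{AA}, \T^{k}_{BB}) \geq 12k$.

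For the upper bound I first establish a $4$-component agreement forest for the single pair $\T_a, \T_b$, namely $\{\{1,3,4,6,8\}, \{2\}, \{5\}, \{7\}\}$: the block $\{1,3,4,6,8\}$ induces the common rooted subtree $(((6,4),3),(1,8))$ in both trees, and the minimal subtrees are pairwise vertex-disjoint in both $\T_a$ and $\T_b$. I then lift this to $\T^{k}_{AA}, \T^{k}_{BB}$ by placing every copy of taxa $2, 5, 7$ in its own singleton ($12k$ singletons in total) and merging the remaining $20k$ taxa into one large component; since the caterpillar backbones of $\T^{k}_{AA}$ and $\T^{k}_{BB}$ coincide at the level of copies, and the five non-singleton taxa in each copy induce the same subtree, the large component's induced subtrees agree in both trees and remain vertex-disjoint from the singletons. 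The resulting agreement forest has $12k+1$ components, so $d_{rSPR}(\T^{k}_{AA}, \T^{k}_{BB}) \leq 12k$. The main obstacle will be locating the $4$-component agreement forest for $\T_a, \T_b$ itself: since no two taxa form a common cherry, size-$2$ blocks are ruled out, and a further combinatorial check rules out $3$-$3$-$1$-$1$ partitions (no two disjoint shared rooted triplets remain simultaneously vertex-disjoint in both trees), leaving one to locate a $5$-taxon subset whose induced subtrees agree in $\T_a$ and $\T_b$.
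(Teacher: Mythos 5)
Your proposal is correct and follows essentially the same route as the paper's proof: the lower bound via $4k$ disjoint copies of $f_{asym}$ (each contributing $4-1=3$), and the upper bound via $d^{2}_{MP}\leq d_{rSPR}$ with the agreement forest obtained by isolating all copies of taxa $2,5,7$, giving $12k+1$ components. The extra details you supply (the Fitch root set $\{A,G\}$ in both $\T_a$ and $\T_b$, and the explicit common subtree $(((6,4),3),(1,8))$ on $\{1,3,4,6,8\}$) check out and merely make explicit what the paper asserts.
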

\begin{proof}
$\T^{k}_{AA}$ comprises $4k$ copies of $\T_a$. By taking
$4k$ copies of character $f_{asym}$, we see that
$d^{2}_{MP}( \T^{k}_{AA}, \T^{k}_{BB} ) \geq 4k(4-1) = 12k$. That $12k$
is also the upper bound, can be verified by showing $d_{rSPR}( \T^{k}_{AA}, \T^{k}_{BB}) \leq 12k$. This follows because by cutting off all copies of taxa $2, 5, 7$ into
separate components, we obtain an agreement forest of $\T^{k}_{AA}, \T^{k}_{BB}$
containing $12k + 1$ components.
\end{proof}


\subsection{The reduction}

We reduce from the NP-hard (and APX-hard) problem CUBIC MAX CUT \citep{alimonti2000}. Here we are given a cubic graph $G=(V,E)$, $|E| = 3|V|/2$,  and we are asked to partition $V$ into two disjoint pieces $V_1 \cup V_2$ such that
the number of edges that have one endpoint in $V_1$ and one endpoint in $V_2$ 
(``cut'' edges), is maximized. Let $MAXCUT(G)$ represent this value. We can assume without loss of generality that $G$ is connected and not bipartite.

The high-level idea is similar to the 2-state hardness reduction in \citep{fischer2014maximum}. Namely, we will construct two trees $\T_V$ and $\T_E$ and apply the symmetry-breaking gadget to ensure that for all optimal characters $f$, $l_f(\T_V) < l_f(\T_E)$. We will model the
vertices as subtrees in $\T_V$, each comprising three taxa, and argue - via a technical
argument - that these subtrees are monochromatic. We will let the 2 states represent the
two sides of the chosen partition $V_1 \cup V_2$. Henceforth we will call these states
\emph{red} and \emph{blue}. The colour of a vertex subtree thus denotes which side of
the partition it is on. The tree $\T_E$ will be constructed such that, the more cut edges
are induced by the partition chosen by $\T_V$, the higher the parsimony score of $\T_E$. The
construction will thus naturally choose a character that maximizes $MAXCUT(G)$. 

The fact that $\T_V$ and $\T_E$ must be binary, introduces significant complications
compared to the 2-state hardness reduction in \citep{fischer2014maximum}.
For this reason we will  introduce
two new special gadgets, that allow $\T_V$ (respectively, $\T_E$)
to be viewed as the independent union of several
subtrees. In $\T_V$ the gadget will be called the \emph{cherry switch} and in $\T_E$ we will have the $D(w_i)$ gadget, to be explained in due course. These independence gadgets neutralise the influence of side-effects that can occur as
a consequence of the fact that $\T_V$ and $\T_E$ are both binary.

\begin{figure}[ht]
\centering\vspace{0.5cm}
    \scalebox{.8}{ \includegraphics[width=14cm]{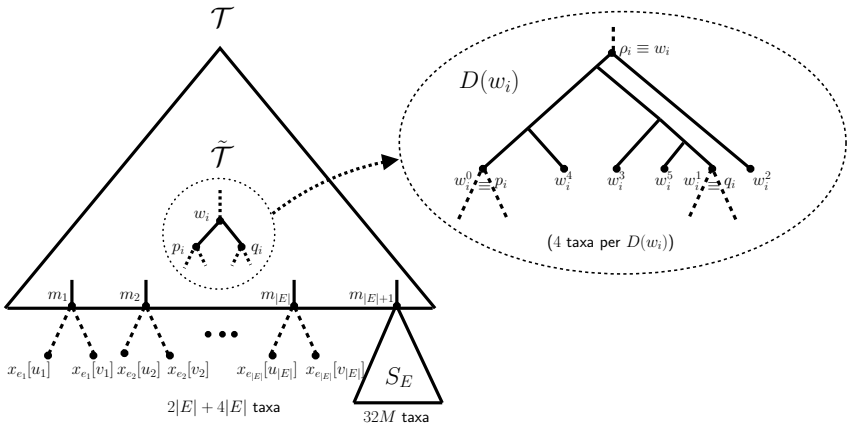}}
\caption{ Tree $\T$ is the left-hand side subtree of $\T_E$ in the 2-state NP-hardness construction, cf. Figure \ref{fig:binaryTE}. Every internal node $w_i$ with children nodes $p_i$ and $q_i$ of the original tree $\tilde{T}$ is replaced by tree $D(w_i)$ with root $w_i$ and children $w_i^0,\ldots,w_i^5$. Children $w_i^0$ and $w_i^1$ correspond to $p_i$ and $q_i$, respectively, whereas the other children form new leaves. Therefore, each $D(w_i)$ contributes four leaves to tree $\T$. For the leaves labelled $x_{e_i}[u_i]$ and $x_{e_i}[v_i]$, $u_i$ and $v_i$ are the endpoints
of edge $e_i$. }
\label{fig:lefthandside}
\end{figure}

We begin by constructing $\T_E$. First, we construct the left-hand side subtree $\T$ of $\T_E$ as depicted in Figure \ref{fig:lefthandside}. Let $\tilde{\T}$ be an arbitrary rooted binary tree
on $|E|+1$ leaves $\{m_1,\ldots,m_{|E|+1}\}$. Let $I = \{ w_1, \ldots, w_{|E|} \}$ be the $|E|$ interior nodes of $\tilde{\T}$. 
Let $M$ be a large integer whose value we will determine in due course. Let $S_V$
be the tree $\T^{M}_{AA}$ and $S_E$ be the tree $\T^{M}_{BB}$. Let $l$ be an arbitrary
leaf of $\tilde{\T}$. We replace $l$ with $S_E$. Next, select an arbitrary bijection between
the remaining leaves of $\tilde{\T}$ and $E$. For each edge $e = \{u,v\} \in E$, replace
the leaf of $\tilde{\T}$ corresponding to $e$ with a cherry on two taxa $\{x_e[u], x_e[v]\}$. Now,
for each internal vertex $w_i$, let $p_i$ and $q_i$ be the two children of $w_i$. We now introduce the independence gadget $D(w_i)$, constructed
as follows. Take a rooted binary tree $(w_i^{2},((w_i^{0},w_i^{4}),(w_{i}^{3},(w_i^{5},w_i^{1}))))$. We
replace $w_i$ with this tree, in the following sense: delete $w_i$, identify $w_i^{0}$ with $p_i$, identify $w_i^{1}$ with $q_i$
and if $w_i$ had an incoming edge, identify the root of $D(w_i)$ with the head of this edge. The
remaining leaves of $D(w_i)$ are $\{ w_i^{2}, w_i^{3}, w_i^{4}, w_i^{5} \}$ and we regard
these as taxa, so replacing each $w_i$ with $D(w_i)$ increases the number of taxa in total
by $4|E|$.
$\T$ has in total $2|E| + 32M + 4|E|$ taxa, where the $32M$ is the number of taxa in $S_E$.

\begin{figure} 
\center
\scalebox{.7}{\includegraphics{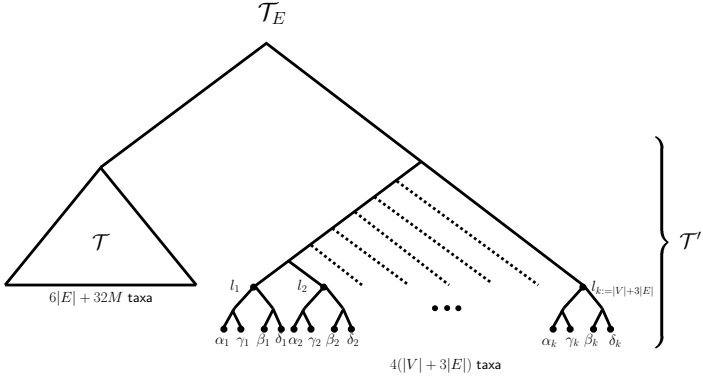} }
\caption{ Tree $\T_E$ for the 2-state NP-hardness construction consists of $\T$ as depicted in Figure \ref{fig:lefthandside} on the left-hand side and $\T'$ on the right-hand side. Note that $\T_E$ employs in total $32M+6|E|+4(|V|+3|E|)=32M+18|E|+4|V|$ taxa. }
\label{fig:binaryTE}
\end{figure}

Let $\T'$ be a rooted caterpillar on $|V| +  3|E|$ leaves $\{ l_1, \ldots, l_{|V|+3|E|}\}$. Replace
each leaf $l_i$ by a ``double cherry'' $( (\alpha_i, \gamma_i), (\beta_i, \delta_i))$ where
$\{ \alpha_i, \beta_i, \gamma_i, \delta_i \}$ are taxa. Join $\T$ and $\T'$ together by a new
root: this completes the construction of $\T_E$ as depicted in Figure \ref{fig:binaryTE}. $\T_E$ thus has in total,
\begin{align*}
&2|E| + 32M + 4|E| + 4( |V| + 3|E| )\\
&= 32M + 18|E| + 4|V| 
\end{align*}
taxa.

To construct $\T_V$ we start by creating a set of taxa-disjoint trees $J$. The disjoint union of the
taxa in the $|V| + 1 + 3|E|$ trees in $J$ will be exactly the set of taxa in the tree $T$ mentioned earlier. $J$ contains,
\begin{enumerate}
\item $S_V$;
\item For each vertex $u \in V$, a rooted triplet $( x_{e}[u], (x_{e^{*}}[u], x_{e^{**}}[u] ))$ where
$e, e^{*}, e^{**}$ are the three edges incident to $u$ in $G$;
\item For each gadget $D(w_i)$, two single taxon trees $w_{i}^{4}$ and $w_{i}^{5}$, and one
cherry $( w_i^{2}, w_i^{3} )$.
\end{enumerate}
Let $C$ be a rooted caterpillar on $|V| + 1 + 3|E|$ leaves. Consider a directed path on $|V| + 3|E|$ edges that
starts at the root of $C$ and terminates at one of the leaves in the unique cherry of $C$. Let $K$ be the
edges in this path. Choose an arbitrary bijection between
the leaves of $C$ and the trees in $J$, and replace each leaf with its corresponding subtree. We now
need to replace each edge in $K$ with a special gadget. In particular,
select an arbitrary bijection between $K$ and $\{ 1, \ldots, |V|+3|E| \}$. Next, for each edge in $K$,
subdivide it twice. From one of the vertices created by the subdivision operation, hang a cherry
$(\alpha_{i}, \beta_i)$, and from the other hang a cherry $(\gamma_i, \delta_i)$, where $i$ is the
index given by the bijection. We call these two cherries a \emph{cherry switch} - this is the independence gadget for $\T_V$. This completes the construction of $\T_V$, which is depicted in Figure \ref{fig:binaryTV}.

\begin{figure} 
\center
\scalebox{.7}{\includegraphics{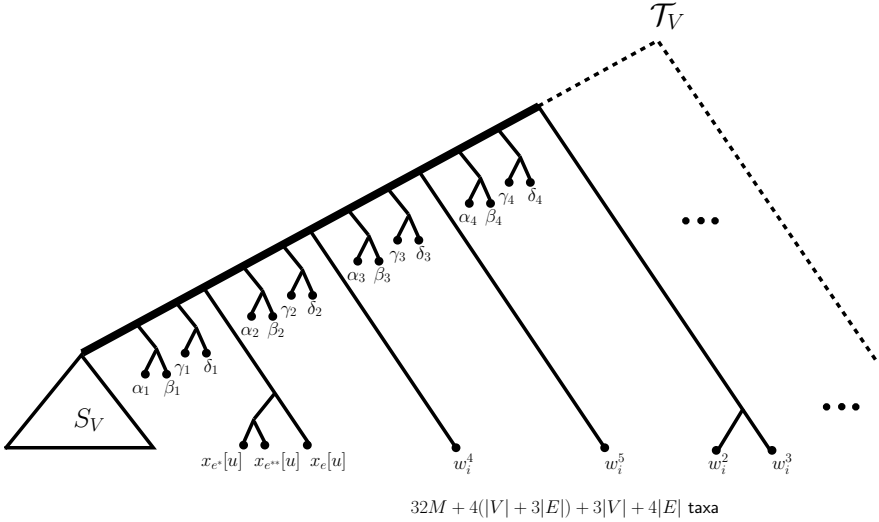} }
\caption{  $\T_V$ consists of a modification of a caterpillar tree with directed path $K$ which starts at the root and leads to a leaf in a cherry. Path $K$ is depicted in bold. Each of the $|V|+3|E|$ edges in $K$ carries two additional cherries $(\alpha_i,\beta_i)$ and $(\gamma_i,\delta_i)$. Therefore, $K$ contributes in total $4(|V|+3|E|)$ taxa. Moreover, for each of the original $|E|$ inner nodes $w_i$ of tree $\T$ as depicted in Figure \ref{fig:lefthandside}, $\T_V$ contains four taxa $w_i^2$, $w_i^3$, $w_i^4$, $w_i^5$. This leads to $4|E|$ more taxa. Finally, for each vertex $u$ in $V$, $\T_V$ contains a triple $(x_e[u],(x_{e^*}[u],x_{e^{**}}[u]))$, which are $3|V|$ taxa. Using the fact that in cubic graphs we have $|V|=\frac{2}{3}|E|$, $\T_V$ employs in total $32M+16|E| + 7|V| = 32M + 18|E| + 4|V|$ taxa. }
\label{fig:binaryTV}
\end{figure} 

We are now in a position to specify the number $M$. We require $M$ to be sufficiently
large that, for every optimal character $f$, $l_f( \T_V ) < l_f( \T_E )$. From
Lemma \ref{lem:gapdist2state} we know that there exists some character $f'$ such
that $l_f'( \T_E ) - l_f'(\T_V) \geq 12M$. We can obtain such a character
by extending the character suggested by Lemma \ref{lem:gapdist2state} such that all
taxa outside $S_E$ and $S_V$ are assigned the same state. Now, let $t$ be the number
of edges in $\T_V$ that lie outside $S_V$. For every character $f$ we have
\[
l_f( \T_V ) - l_f(\T_E) \leq t + (12M - (M+1)).
\]
The $12M$ term is obtained from Lemma \ref{lem:gapdist2state}, the $(M+1)$ term from Lemma \ref{lem:gap2state}, and the $t$ term arises (pessimistically) from the situation when every
edge in $\T_V$ (outside $S_V$) incurs a mutation, but no edge in $\T_E$ (outside $S_E$)
incurs a mutation. So, if we choose $M$ such that
\[
t + 12M - (M+1) < 12M
\]
it follows that for \emph{every} optimal character $f$, $d_{MP}(\T_V, \T_E) = l_f(\T_E) - l_f(\T_V)$ and in particular $l_f(\T_E) > l_f(\T_V)$. Choosing $M = t$ is therefore sufficient
to achieve this. 

Consider now the following.

\begin{observation}
\label{obs:diff}
Let $\T_1, \T_2$ be two binary trees and let $f$ be an optimal
character such that $l_f(\T_1) < l_f(\T_2)$.
 Suppose $\T_1$ contains two cherries $(a,b)$ and
$(c,d)$ and, in $\T_2$, there are cherries $(a,c)$ and $(b,d)$ under
a common parent (i.e. a ``double cherry''). Then $f$ can be modified to obtain an optimal character $f''$ 
in which $(a,b)$
and $(c,d)$ are both monochromatic but with different colours, and the colours
of all other taxa are unchanged.
\end{observation}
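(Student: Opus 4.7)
\emph{Plan.} The strategy is first to reduce to the case where both cherries are already monochromatic in some optimal character, and then to show that in that case we are either already done or can flip one pair of taxa to produce the desired $f''$. First, I would apply the argument from the proof of Lemma~\ref{lem:monochrome} only to the cherries $(a,b)$ and $(c,d)$ of $\T_1$: each application changes the colour of at most one of the four taxa in $\{a,b,c,d\}$, touches no other taxon, and preserves optimality. The resulting character $f^{(2)}$ is therefore optimal and has both cherries monochromatic. If the two monochromatic colours differ, we set $f''=f^{(2)}$, so from here on I assume both cherries carry the same colour $\alpha$; since we are in the $2$-state setting, let $\neg\alpha$ denote the other state.

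Next, I would define $f''$ by flipping the colours of $c$ and $d$ from $\alpha$ to $\neg\alpha$ and leaving all other taxa (including $a$ and $b$) unchanged; the task is then to verify that $f''$ is still optimal. I would bound $\Delta_1:=l_{f''}(\T_1)-l_{f^{(2)}}(\T_1)$ and $\Delta_2:=l_{f''}(\T_2)-l_{f^{(2)}}(\T_2)$ separately. For $\T_1$, since $(c,d)$ is monochromatic both before and after the flip, it can be contracted to a single leaf $p_{cd}$ without affecting parsimony, and Observation~\ref{obs:mostone} applied to this one-step colour change gives $|\Delta_1|\le 1$. For $\T_2$, I would run the bottom-up phase of Fitch on the double cherry $((a,c),(b,d))$ rooted at $p_{acbd}$ to verify that introducing the mismatch forces exactly two mutations inside the double cherry, regardless of the state chosen at $p_{acbd}$. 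Letting $u$ and $v$ denote the parsimony of $\T_2$ when the entire subtree below $p_{acbd}$ is replaced by a single leaf coloured $\alpha$ or $\neg\alpha$ respectively, the Fitch computation yields
\[
l_{f^{(2)}}(\T_2)=\min(u,\,2+v),\qquad l_{f''}(\T_2)=2+\min(u,v),
\]
while Observation~\ref{obs:mostone} applied to the leaf $p_{acbd}$ of this reduced tree gives $|u-v|\le 1$.

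I would then split into two cases. If $u\le v$, then $\Delta_2=2$, which combined with $\Delta_1\le 1$ yields
\[
l_{f''}(\T_2)-l_{f''}(\T_1)=d_{MP}+(\Delta_2-\Delta_1)\ge d_{MP}+1,
\]
contradicting the maximality of $d_{MP}$; hence this case cannot arise. Otherwise $u>v$, which together with $|u-v|\le 1$ forces $u=v+1$ and $\Delta_2=1$. The maximality of $d_{MP}$ then gives $\Delta_2-\Delta_1\le 0$, so $\Delta_1\ge 1$; combined with $\Delta_1\le 1$ we conclude $\Delta_1=1$. Hence $l_{f''}(\T_2)-l_{f''}(\T_1)=d_{MP}$ and $f''$ is an optimal character in which $(a,b)$ is monochromatic of colour $\alpha$ and $(c,d)$ is monochromatic of colour $\neg\alpha$, with every other taxon unchanged.

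The main obstacle is ruling out the case $u\le v$, i.e.\ proving that $u>v$ must hold under the optimality hypothesis. This is exactly where the different topologies of $\T_1$ and $\T_2$ pay off: a colour change at $(c,d)$ in $\T_1$ costs at most one mutation (by the contraction argument plus Observation~\ref{obs:mostone}), whereas the double cherry in $\T_2$ is always penalised by two mutations once its two sub-cherries disagree. Any scenario in which the rest of $\T_2$ preferred state $\alpha$ at $p_{acbd}$ would allow $f''$ to exceed $d_{MP}$, which is forbidden; the Fitch formulas above make this asymmetry quantitative and drive the whole argument.
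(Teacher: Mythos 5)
Your proof is correct and follows essentially the same route as the paper's: first make both cherries of $\T_1$ monochromatic via Lemma~\ref{lem:monochrome}, then, if they share a colour, flip one cherry and observe that the score of $\T_1$ rises by at most one (the contracted-cherry/Observation~\ref{obs:mostone} argument) while the double cherry in $\T_2$ acquires two forced internal mutations of which at most one can be saved on the edge entering its common parent. Your $u$/$v$ Fitch decomposition is simply a more explicit (and two-state-specific) version of that counting step, and your care in applying the monochromatization only to $(a,b)$ and $(c,d)$ correctly secures the ``all other taxa unchanged'' clause.
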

\begin{proof}
Let $f$ be an optimal character. We first apply Lemma \ref{lem:monochrome} to obtain an
optimal character $f'$ in which the two cherries are monochromatic in $\T_1$. If the
two cherries have different colours we are done. If not, then recolour one of the
cherries to obtain $f''$. This raises the parsimony score of $\T_1$ by (at most) one. In $\T_2$
two new mutations are created in the cherries $(a,c)$ and $(b,d)$ while
at most one mutation is saved on the edge entering the common parent. Hence,
$f''$ is also optimal.
\end{proof}

In exactly the same way as Theorem \ref{thm:hard} we now give an accumulating list of properties
which can be shown to be enjoyed by at least one optimal character that can be constructed in polynomial
time.

\noindent
\emph{\textbf{Property 1}. In $T_V$, for each $i \in |V| + 3|E|$, cherry
$(\alpha_i, \beta_i)$ is monochrome and cherry $(\gamma_i, \delta_i)$ is monochrome, and the
cherries have different colours.}\\
\\
\noindent
\emph{Proof.} This is an immediate consequence of Observation \ref{obs:diff}.\\
\\
\noindent
Next, observe that if a character $f$ has Property 1, and we swap the colours used
in some (or all) of the cherry switches to obtain $f'$, then $l_f(\T_E) = l_{f'}(\T_E)$. This is because
each cherry switch in $\T_V$ corresponds to a double cherry in $\T_E$, and (as long as Property 1 already
holds) the behaviour of the double cherries is invariant under permutation of \emph{red} and \emph{blue}.  This is the key
observation behind the next property. \\
\\
\noindent
\emph{\textbf{Property 2}. There is an optimal character $f$ such that}
\[
l_f(\T_V) = |V| + 3|E| + \sum_{T^{*} \in J} l_f( \T^{*} )
\]
\emph{where $l_f( \T^{*} )$ has the expected meaning i.e. the parsimony score of $\T^{*}$ after restricting $f$ to the taxa in $\T^{*}$.}\\
\\
\noindent
\emph{Proof.} Observe that for any optimal character $f$, $|V| + 3|E| + \sum_{\T^{*} \in J} l_f( \T^{*} )$ is
a lower bound on $l_f( \T_V )$. This can be observed by first applying Fitch's algorithm to the trees in $J$ (which
are all pendant in $\T_V$) and then noting that, due to Property 1, each of the $|V| + 3|E|$ cherry switches also incurs a mutation,
irrespective of the states that Fitch's algorithm designates to the roots of the trees in $J$. To show that it is also
an upper bound, first run Fitch on the trees in $J$. For those trees in $J$ that are allowed by Fitch to have either colour at the
root, pick one arbitrarily. For each cherry switch, consider the root state of the tree from $J$ directly above it (where here ``above'' means:
closer to the root of $\T_V$), and directly
below it. There are four possibilities: \emph{red-blue} (i.e. the tree from $J$ above it wants a root state of \emph{red}, the tree below it
wants \emph{blue}), \emph{red-red}, \emph{blue-red}, and \emph{blue-blue}.  If it is \emph{red-blue}, then if necessary swap the colours on the two cherries in the cherry switch, to ensure that the
\emph{red} cherry is closer to the root of $\T_V$. If it is \emph{blue-red}, then ensure that the \emph{blue} cherry is closer to the root of $\T_V$. Now, irrespective of which of the four possibilities holds, there is an optimal extension which occurs exactly one mutation (and not more) per cherry switch. In the \emph{red-blue} and \emph{blue-red} cases
the mutation will be on the edge between the two subdivision vertices (i.e. the edge between the vertices at which the two cherries
are attached to the caterpillar backbone). In the cases \emph{red-red} and \emph{blue-blue} the mutation will be
on the edge feeding into the \emph{blue}, respectively \emph{red} cherry.\\
\\
\noindent
\emph{\textbf{Property 3}. In $\T_V$, the trees in $J$ that are rooted triplets or cherries, are all monochrome.}\\
\\
\emph{Proof.} That the cherries can be made monochrome, is simply a consequence of Lemma \ref{lem:monochrome}. That the rooted triplets are monochrome
is more subtle. Consider any triplet in $J$, this has the form $( x_{e}[u], (x_{e^{*}}[u], x_{e^{**}}[u] ))$. We already know that $ \{ x_{e^{*}}[u], x_{e^{**}}[u] \}$ have the
same colour, as they form a cherry. Now, if $x_{e}[u]$ also has this colour, we are done. If not, then recolour it to give it the same colour as the other two
taxa. By Property 2, this \emph{must}
lower the parsimony score of $\T_V$ by exactly one. Hence, the new character is also optimal. (We really need Property 2 here, since ``the parsimony
score of $\T_V$ does not increase'' -- which in general is the strongest statement we can make after such a recolouring -- is not strong enough for our purposes).

Property 3 basically says that, in $\T_V$, the three taxa that represent each vertex of $G$ all have the same colour. This will allow us to encode MAX CUT
correctly. Property 3 also tells us that the $\{ w_i^{2}, w_i^{3} \}$
pairs of taxa, which form part of the $D(w_i)$ gadget, will be monochrome. This is particularly useful when combined with the fact that $w_i^{4}$ and $w_i^{5}$ are both
single taxa trees in $J$. A tree comprising only a single taxon has parsimony score 0, so whichever colour is allocated to the $w_i^{4}$ and $w_i^{5}$ taxa, they do
not impact upon the parsimony score of $\T_V$, by Property 2. In other words, these two taxa are ``free'': they can be allocated any colour in an attempt to cause as
many mutations as possible in $\T_E$. The $\{ w_i^{2}, w_i^{3} \}$ pairs of taxa are also ''free'', except for the limitation that $w_i^{2}$ and $w_i^{3}$ should have the same
colour. This underpins the following critical observation. 

\begin{observation}
\label{obs:dissolve}
Consider the rooted binary tree
\[
D(w_i) = (w_i^{2},((w_i^{0},w_i^{4}),(w_{i}^{3},(w_i^{5},w_i^{1})))).
\]
Suppose we fix $w_i^{0}$ as $red$, or $blue$, or $\{red, blue\}$, where $\{red,blue\}$ has the same meaning
as in Fitch's algorithm i.e. ``both states are possible''. Suppose we
do the same (independently) for $w_i^{1}$. Then depending on our choice
we can always select colours for $w_i^{2}$, $w_i^{3}$, $w_i^{4}$, $w_i^{5}$, whilst ensuring that the
same colour is chosen for $w_i^{2}$ and $w_i^{3}$, such that
the parsimony score of $D(w_i)$ under the resulting character is at least 2. Moreover, it is never possible
to achieve a parsimony score higher than 2 in this way.
\end{observation}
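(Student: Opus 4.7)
The plan is to prove the two bounds via direct case analysis on Fitch's bottom-up algorithm applied to $D(w_i)$. Label the internal nodes: let $B$ be the parent of $(w_i^{0}, w_i^{4})$, let $D$ be the parent of $(w_i^{5}, w_i^{1})$, let $C$ be the parent of $w_i^{3}$ and $D$, let $A$ be the parent of $B$ and $C$, and let $\rho$ be the root (parent of $w_i^{2}$ and $A$). Given input state sets $S_0, S_1$ at $w_i^{0}, w_i^{1}$ and a choice $c := w_i^{2} = w_i^{3}$ together with states for $w_i^{4}, w_i^{5}$, the parsimony contribution of $D(w_i)$ equals the number of internal nodes at which Fitch takes a union rather than an intersection; write $m(X) \in \{0,1\}$ for this indicator at node $X$, so that the parsimony score is $m(B) + m(D) + m(C) + m(A) + m(\rho)$.

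For the upper bound, the key rigidities are: (i) $w_i^{3} = c$ forces $c \in S_C$, so $S_C \in \{\{c\}, \{r,b\}\}$; (ii) $m(C) = 1$ forces $S_D = \{\bar c\}$ and hence $m(D) = 0$; and (iii) $m(A) = 1$ requires $S_B = \{\bar c\}$, which in turn forces $m(B) = 0$ because the intersection of $S_0$ with $\{w_i^{4}\}$ (or of the relevant leaf assignments) must then already be non-empty. Splitting on $m(C)$: if $m(C) = 0$ then $S_C = \{c\}$, and either $c \in S_B$, giving $m(A) = m(\rho) = 0$ and total $\leq m(B) + m(D) \leq 2$; or $S_B = \{\bar c\}$, giving $m(B) = 0$, $m(A) = 1$, and $S_A = \{r,b\}$ so $m(\rho) = 0$, with total $\leq 0 + m(D) + 0 + 1 + 0 \leq 2$. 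If $m(C) = 1$ then $m(D) = 0$ and $S_A = S_B$; the subcase $m(\rho) = 1$ forces $S_B = \{\bar c\}$ and hence $m(B) = 0$, giving total $0 + 0 + 1 + 0 + 1 = 2$, while $m(\rho) = 0$ gives total $\leq m(B) + 0 + 1 + 0 + 0 \leq 2$.

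For the lower bound, the shortcut is that the single universal choice $c = r$, $w_i^{4} = b$, $w_i^{5} = b$ already achieves exactly $2$ mutations for every one of the nine pairs $(S_0, S_1) \in \{\{r\}, \{b\}, \{r,b\}\}^{2}$. A routine Fitch computation (to be tabulated in full) confirms this by exhibiting, in each case, exactly two ``hot'' nodes whose identity depends on $(S_0, S_1)$: for example, $(\{r\}, \{r\})$ fires $m(B) = m(D) = 1$; $(\{r\}, \{b\})$ and $(\{r\}, \{r,b\})$ fire $m(B) = m(C) = 1$; $(\{r,b\}, \{r,b\})$ fires $m(C) = m(\rho) = 1$; and the three remaining pairs involving a $\{b\}$ component fire either $m(C) = m(\rho) = 1$ or $m(D) = m(A) = 1$. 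Matching this with the upper bound of $2$ closes the argument.

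The main obstacle is the upper bound, where one must rule out the accumulation of three mutations across the five internal nodes of $D(w_i)$. The constraint $w_i^{2} = w_i^{3}$ is essential: through $c \in S_C$ it enforces all three rigidities simultaneously, blocking the forbidden configurations $\{m(B), m(C), m(\rho)\}$, $\{m(B), m(A), m(\rho)\}$, and $\{m(D), m(C), \ldots\}$ from all evaluating to $1$ at once. Without this constraint the same six-leaf topology would admit $2$-state characters of parsimony score $3$, and the gadget $D(w_i)$ would fail in its intended role as the $\T_E$-side ``independence'' device that caps each $D(w_i)$-contribution at exactly $2$.
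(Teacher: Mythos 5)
Your proof is correct, and both halves take a route that differs meaningfully from the paper's. For the lower bound the paper runs a case analysis in which the free leaves $w_i^{2},w_i^{3},w_i^{4},w_i^{5}$ are coloured differently in each of the five cases (up to symmetry); you instead exhibit a single universal assignment ($w_i^{2}=w_i^{3}=\emph{red}$, $w_i^{4}=w_i^{5}=\emph{blue}$) and check all nine input pairs. I have verified that this assignment yields exactly two union events for every pair, so the shortcut works and is a genuine simplification --- but you should actually write out the deferred table, and note that your prose miscounts the ``remaining'' pairs: five remain after the four you list, not three (the omitted pair $(\{r,b\},\{r\})$ fires $m(D)=m(A)=1$ and so still fits your dichotomy). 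For the upper bound the paper argues globally: a parsimony score of $3$ on six $2$-state leaves forces a $3$--$3$ colour split, a majority-colour extension disposes of the $\{red,blue\}$ inputs, and a short contradiction finishes. You instead propagate Fitch sets through the five internal nodes and exploit the rigidity $c \in S_C$ forced by $w_i^{2}=w_i^{3}$; your case split on $m(C)$ and $m(\rho)$ is more mechanical but also more airtight, since the paper's assertion that ``exactly one of $w_i^{0}$ and $w_i^{4}$ will be red'' does not follow from the $3$--$3$ count alone (the lone leaf of the minority colour could be $w_i^{5}$ or $w_i^{1}$), whereas your analysis covers every configuration. Each approach buys something: the paper's counting argument is shorter and independent of the Fitch bookkeeping, while yours isolates exactly which structural constraints ($c\in S_C$, and the fact that a singleton Fitch set can only arise from an intersection event) cap the gadget at two mutations.
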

\begin{proof}
A straightforward case-analysis is sufficient to verify the ``at least 2'' part of the claim. There are $3^2$ cases, several
of which are symmetrical. These are the relevant cases:
\begin{enumerate}
\item $w_i^{0}$ and $w_i^{1}$ are both \emph{red}. Then choose all other taxa to be \emph{blue}.
\item  $w_i^{0}$ is \emph{red} and $w_i^{1}$ is \emph{blue}. Then choose $w_i^{4}$ to be \emph{blue},
$w_i^{5}$ to be \emph{red}, and $w_i^{2}$ and $w_i^{3}$ to both be \emph{blue}.
\item $w_i^{0}$ is \emph{red} and $w_i^{1}$ is $\{ red, blue \}$. Then choose $w_i^{4}$ to be \emph{blue},
$w_i^{5}$ to be \emph{blue}, and $w_i^{2}$ and $w_i^{3}$ to both be \emph{red}.
\item $w_i^{0}$ is $\{ red, blue \}$ and $w_i^{1}$ is \emph{red}. Then choose $w_i^{4}$ to be \emph{red},
$w_i^{5}$ to be \emph{blue}, and $w_i^{2}$ and $w_i^{3}$ to both be \emph{blue}.
\item $w_i^{0}$ and $w_i^{1}$ are both $\{ red, blue \}$. Then choose $w_i^{4}$ to be \emph{red}, $w_i^{5}$
to be \emph{red}, and $w_i^{2}$ and $w_i^{3}$ to both be \emph{blue}.
\end{enumerate}
To show that 3 or more mutations are never possible, note that a character on 6 taxa can only possibly have a parsimony
score of 3 if there are exactly 3 \emph{red} taxa and exactly 3 \emph{blue} taxa. (Otherwise, simply choose an
extension that assigns the majority colour to all internal nodes of the tree, yielding at most 2 mutations.) Now, if
at least one of $w_i^{0}$ and $w_i^{1}$ chooses $\{red, blue\}$, then 3 mutations are certainly not possible,
because we can (again) colour all the internal nodes of the tree monochrome in the majority colour, yielding
at most 2 mutations. So, suppose without loss of generality $w_i^{2}$ and $w_i^{3}$ are both \emph{red}. Then
exactly one of $w_i^{0}$ and $w_i^{4}$ will be \emph{red}, and the other \emph{blue}. But then
$w_i^{1}$ and $w_i^{5}$ will both be \emph{blue}. But this character has parsimony score at most 2, contradiction. 
\end{proof}

In $\T_E$ the taxa $w_i^{0}$ and $w_i^{1}$ become the roots of subtrees, and the three possible
choices for each taxon in Observation \ref{obs:dissolve} reflect the three possible decisions that
Fitch's algorithm can make when, in the bottom-up phase, the root of that subtree is reached. Essentially, then, Observation \ref{obs:dissolve}
allows us to ``glue'' these two subtrees together with a profit of \emph{exactly} 2 mutations, entirely independently
of the two subtrees themselves. 

Now, consider any optimal character $f$ that has Property 3 (and thus all earlier properties too). We have
\[
l_f(\T_V) = |V| + 3|E| + l_f( S_V )
\]
since (by Property 2) the singletons, cherries and triplets in $J$ do not internally generate any
mutations and mutations along the $K$ part of $\T_V$ are already accounted for. (As usual, $ l_f( S_V )$ refers to the parsimony score of the restriction of $f$ to the taxa in $S_V$). Let $CUT(f)$ be the number of cut edges induced by $f$ i.e. after partitioning the vertices of $V$ according to the colours of the corresponding rooted triplets in $J$. We have,
\[
l_f(\T_E) = 2(|V| +  3|E|) + l_f(S_E) + CUT(f) + 2|E| 
\]
The $2(|V| +  3|E|)$ term is the contribution of the double cherries, and the $2|E|$ term is the 2 mutations that we know we can definitely incur in each
$D(w_i)$ gadget. Hence, an optimal character should try and make the induced cut as large as possible: there is no other freedom. Consequently,
\begin{align*}
d^{2}_{MP}(\T_V, \T_E) & = l_f(\T_E) - l_f(\T_V)\\
&=  2(|V| +  3|E|) + l_f(S_E) + MAXCUT(G) + 2|E| - (|V| + 3|E| + l_f( S_V ))\\
&= |V| + 5|E| + (l_f(S_E) - l_f( S_V)) + MAXCUT(G)\\
&= |V| + 5|E| + 12M + MAXCUT(G)
\end{align*}
The fact that $(l_f(S_E) - l_f( S_V))$ is equal to $12M$ is not entirely
automatic. It is a consequence of the fact that in this context there is no point choosing a character $f$ which, when restricted to $S_V$ and $S_E$, yields an MP distance smaller
than $d^{2}_{MP}(S_V, S_E)$ (where the latter value is equal to 12M by Lemma \ref{lem:gapdist2state}).

The terms can easily be rearranged to obtain $MAXCUT(G)$ from $d^{2}_{MP}$, which yields the overall theorem:

\begin{theorem}
Computation of $d^{2}_{MP}$ is NP-hard on binary trees.
\end{theorem}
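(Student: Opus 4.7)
The plan is to recognize that essentially all the analytic work has already been done: just before the theorem statement, we derived the closed form
\[
d^{2}_{MP}(\T_V, \T_E) = |V| + 5|E| + 12M + MAXCUT(G),
\]
so the proof of the theorem itself is merely the standard packaging of this identity into a reduction argument from CUBIC MAX CUT. First, I would verify that the construction is polynomial-time computable: the only parameter that a priori could blow up is $M$, but $M$ was fixed equal to $t$, the number of edges of $\T_V$ outside $S_V$, and those edges number $O(|V|+|E|)$ (the cherry switches on the backbone $K$, the triplets and cherries in $J$, and the $D(w_i)$ gadgets all contribute linearly in $|V|+|E|$). Hence $M$ is linear in $|V|$, so $|\T_V| = |\T_E| = 32M + 18|E|+4|V|$ is polynomial in $|V|$, and both trees can be written down in polynomial time.

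Next, given the closed form identity, a polynomial-time algorithm that decides $d^{2}_{MP}$ on binary trees would let us recover
\[
MAXCUT(G) = d^{2}_{MP}(\T_V,\T_E) - |V| - 5|E| - 12M
\]
in polynomial time. Since CUBIC MAX CUT is NP-hard~\citep{alimonti2000}, this immediately yields NP-hardness of computing $d^{2}_{MP}$ on binary trees.

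The one subtlety I would want to tie down cleanly is that the identity is derived under the standing assumption that we have an optimal character $f$ simultaneously satisfying Properties 1--3 and for which $l_f(S_E)-l_f(S_V)=12M=d^{2}_{MP}(S_V,S_E)$. The accumulating-property arguments given earlier already produce such an $f$ without losing optimality, because each transformation in Properties 1--3 preserves optimality and the value $l_f(\T_E)-l_f(\T_V)$ is bounded above by $|V|+5|E|+12M+MAXCUT(G)$ in view of the $CUT(f)\le MAXCUT(G)$ upper bound in the formula. For the matching lower bound, I would explicitly exhibit a realising character: take the 3-state-to-2-state asymmetric character from Lemma~\ref{lem:gapdist2state} on $S_V,S_E$, colour the three taxa of each vertex-triplet in $J$ with the side of an optimal max-cut partition of $V$, propagate this via $x_{e}[u]$ taxa to $\T_E$, orient each cherry switch in $\T_V$ as in the proof of Property~2, and use the case-analysis of Observation~\ref{obs:dissolve} to pick free colours for the $w_i^{2},w_i^{3},w_i^{4},w_i^{5}$ taxa that realise exactly 2 mutations inside each $D(w_i)$.

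The main (and only real) obstacle I anticipate is purely bookkeeping: making sure that the mutation counts in the two closed expressions for $l_f(\T_V)$ and $l_f(\T_E)$ are justified without double-counting, in particular that the $2(|V|+3|E|)$ contribution of the double cherries in $\T_E$ is indeed additive with the contributions of $S_E$, of the $D(w_i)$ gadgets, and of the $CUT(f)$ term, since all four ``regions'' are separated by edges which (under a properly chosen extension) either carry an already-counted mutation or none at all. Once this is checked, combining the upper and lower bounds gives equality, and NP-hardness follows.
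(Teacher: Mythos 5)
Your proposal is correct and follows essentially the same route as the paper: the paper also treats the theorem as the final packaging step of the preceding construction, deriving $d^{2}_{MP}(\T_V,\T_E) = |V| + 5|E| + 12M + MAXCUT(G)$ via Properties 1--3 and Observation \ref{obs:dissolve}, and then simply rearranging terms to recover $MAXCUT(G)$, with $M=t$ linear in $|E|$ guaranteeing a polynomial-size reduction. The subtleties you flag (realising the lower bound with an explicit character, additivity of the mutation counts across the gadget regions, and $l_f(S_E)-l_f(S_V)=12M$ at optimality) are exactly the points the paper addresses in the accumulated properties and the remark following the displayed computation.
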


We also obtain the following corollary.

\begin{corollary}
Computation of $d^{2}_{MP}$ is APX-hard on binary trees.
\end{corollary}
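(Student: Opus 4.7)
The plan is to argue that the reduction built in the preceding theorem is in fact an L-reduction from CUBIC MAX CUT to computation of $d^{2}_{MP}$ on binary trees. Since CUBIC MAX CUT is APX-hard \citep{alimonti2000} and L-reductions preserve APX-hardness, the corollary follows. The key algebraic input is the closed-form relation $d^{2}_{MP}(\T_V, \T_E) = C_0 + MAXCUT(G)$ already derived in the proof, where $C_0 = |V| + 5|E| + 12M$ is polynomial-time computable from $G$.

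First I would verify the optimum-comparison condition: $d^{2}_{MP}(\T_V, \T_E) \leq \alpha \cdot MAXCUT(G)$ for some absolute constant $\alpha$. The parameter $M$ was chosen equal to the number $t$ of edges of $\T_V$ lying outside $S_V$, which from the explicit construction is linear in $|V|$ (since $|E| = 3|V|/2$), so $C_0 = O(|V|)$. Because every cubic graph admits a cut of size at least $|E|/2 = 3|V|/4$ by a standard greedy or probabilistic argument, $MAXCUT(G) = \Omega(|V|)$, and hence $C_0 = O(MAXCUT(G))$, giving the required constant $\alpha$.

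Next I would specify the solution-recovery map. Given any feasible character $f$ on $(\T_V, \T_E)$ with value $v = l_f(\T_E) - l_f(\T_V)$, apply the canonical-form construction (Properties 1--3 of the preceding theorem) to obtain a character $f'$ in which the vertex triplets are monochromatic, then read off a cut $C$ of $G$ by partitioning $V$ according to the colour of the triplet $( x_{e}[u], (x_{e^{*}}[u], x_{e^{**}}[u] ))$ for each vertex $u$. The Lipschitz condition to establish is
\[
MAXCUT(G) - |C| \leq \beta \cdot \bigl(d^{2}_{MP}(\T_V, \T_E) - v\bigr)
\]
for some absolute constant $\beta$, which together with the previous estimate completes the L-reduction.

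The main obstacle is precisely this last Lipschitz bound. Properties 1--3 were derived assuming that $f$ is \emph{optimal}: each local re-colouring was justified because the accompanying change in $\T_V$ had to match the change in $\T_E$ on pain of contradicting optimality. For a suboptimal $f$ the two trees' scores may decouple, and each single-taxon re-colouring can shift the value by up to $2$ (by Observation \ref{obs:mostone}). I would address this via a charging argument: only polynomially many re-colouring steps occur, each triggered by a structural violation of Properties 1--3, and each such violation present in the original $f$ can be charged to a bounded loss in the recovered cut (for instance, a non-monochromatic vertex triplet forfeits at most $O(1)$ potential cut-edges under the majority rule). Combined with the fact that $C_0 = O(MAXCUT(G))$, this amortisation would yield the required constant $\beta$. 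Once both L-reduction constants are in hand, the APX-hardness of $d^{2}_{MP}$ on binary trees is an immediate consequence of the APX-hardness of CUBIC MAX CUT.
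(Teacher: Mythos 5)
Your overall strategy coincides with the paper's: an L-reduction from CUBIC MAX CUT using the identity $d^{2}_{MP}(\T_V,\T_E) = C_0 + MAXCUT(G)$ with $C_0 = |V|+5|E|+12M = O(|E|)$, together with the lower bound $MAXCUT(G) = \Omega(|E|)$ for cubic graphs to absorb the additive constant. That part of your argument is fine and matches the paper (which uses $MAXCUT(G)\geq 2|E|/3$ and computes $C_0 = (1505/3)|E|$ explicitly, arriving at a loss factor of $\tfrac{1507}{2}$ in the approximation guarantee).

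The gap is in the solution-recovery step, and you have correctly located it but proposed the wrong repair. Your charging argument --- ``each structural violation of Properties 1--3 forfeits $O(1)$ cut-edges under a majority rule'' --- is not carried out, and it is not clear how to make it work: a suboptimal character can violate the properties in ways whose cost is not locally attributable to individual vertex triplets (e.g.\ the restriction to $S_V\cup S_E$ may be far from optimal, or the symmetry-breaking inequality $l_f(\T_V)<l_f(\T_E)$ may fail outright, in which case none of the property proofs even apply). The resolution the paper uses is simpler and does not need amortisation: first overwrite the restriction of $f$ to $S_V\cup S_E$ with the duplicated $f_{asym}$ character, which cannot decrease $l_f(\T_E)-l_f(\T_V)$ and \emph{guarantees} $l_f(\T_V)<l_f(\T_E)$; then observe that each individual recolouring in the proofs of Properties 1--3 is justified by a local accounting (``$\T_E$'s score rises by at least as much as $\T_V$'s'', or ``$\T_V$'s score drops by exactly one while $\T_E$'s drops by at most one'') that never invokes global optimality of $f$, only the fixed sign of the difference. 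Hence every transformation is value-non-decreasing even for suboptimal $f$, and the cut read off from the resulting canonical character has size at least $v - C_0$, which is exactly the Lipschitz condition you wanted with $\beta=1$. Your concern that ``each single-taxon re-colouring can shift the value by up to 2'' by Observation \ref{obs:mostone} is the worst case for an arbitrary recolouring, but the recolourings in the property proofs are specifically chosen and analysed so that the shift is favourable; re-reading those proofs with ``optimal'' replaced by ``$l_f(\T_V)<l_f(\T_E)$'' is all that is needed.
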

\begin{proof}
We will show that  if $d^{2}_{MP}$ can be approximated
in polynomial time to within a multiplicative factor of $(1-\epsilon)$, for some $\epsilon>0$, that CUBIC MAXCUT can be approximated in polynomial time to within a factor of $(1-k\epsilon)$ for some constant $k>0$ that is independent of $\epsilon$. Given that CUBIC MAX CUT is APX-hard \citep{alimonti2000} there is (by definition) some $\epsilon' > 0$ such that a factor $(1-\epsilon')$ approximation or better is not possible in polynomial time unless $P=NP$.  The APX-hardness of
$d^2_{MP}$ will then follow\footnote{Formally speaking we should give an L-reduction here \citep{papadimitriou1991optimization}. For brevity we omit
the technicalities. An L-reduction can if desired easily be constructed from the information provided here.} : the corresponding threshhold for $d^{2}_{MP}$ will be $\epsilon' / k$.

First, suppose we obtain character $f$, which is a $(1-\epsilon)$ approximation to $d^2_{MP}( \T_V, \T_E)$. We need to show that
a feasible solution (i.e. a cut) can be extracted in polynomial time from $f$, which requires that the solution obeys all the Properties. Character $f$ might not have these Properties, but they can be acquired in polynomial time without lowering the parsimony distance score of the character. To do this, ensure first that $S_V$ and $S_E$ use the
duplicated character $f_{asym}$ (which optimizes the MP distance between $S_V$ and $S_E$). This ensures that $l_f(\T_V) < l_f(\T_E)$. From this point on the Properties can
be accumulated one at a time: the constructive proofs describing how the Properties are obtained do not require that $f$ is
optimal, only that $l_f(\T_V) < l_f(\T_E)$.

Recall that $|V| =(2/3)|E|$. We need an explicit expression for $M$. This was set to be $t$, the number of edges
in $\T_V$ minus the edges in subtree $S_V$. $\T_V$ has in total $32M + 18|E| + 4|V|$ taxa, and after subtracting
the $32M$ this gives $18|E| + 4|V|$. A rooted binary tree on $|X|$ taxa has $2(|X|-1)$ edges, yielding
$36|E| + 8|V| - 2$, plus 2 extra edges created when the subtree $S_V$ is re-attached, giving $36|E| + 8|V|$
which is $(124/3)|E|$. Hence,
\begin{align*}
d^{2}_{MP}(\T_V, \T_E) &=  |V| + 5|E| + 12M + MAXCUT(G)\\
&= (2/3)|E| + 5|E| + 496|E| + MAXCUT(G)\\
&= (1505/3)|E| + MAXCUT(G).
\end{align*}
The size of the cut returned after processing $f$ is at least
\begin{align*}
&= ( 1-\epsilon )( (1505/3)|E| + MAXCUT(G) ) - (1505/3)|E|\\
& = (1-\epsilon)MAXCUT(G) - \epsilon(1505/3)|E|
\end{align*}

It is well-known
that for cubic $G$, $MAXCUT(G) \geq 2|E|/3$, by moving a vertex to the other side of the partition if one or fewer
of its incident edges is in the cut. So,
\begin{align*}
& (1-\epsilon)MAXCUT(G) - \epsilon(1505/3)|E|\\
& \geq (1-\epsilon)MAXCUT(G)  - \epsilon(1505/2)MAXCUT(G)\\
& = (1 - \frac{1507}{2}\epsilon)MAXCUT(G).\\
\end{align*}
This concludes the proof.
\end{proof}


\section{An Integer Linear Programming (ILP) formulation for binary instances}
\label{sec:ilp}
Let $\T_1$ and $\T_2$ be two binary phylogenetic trees on $n \geq 2$ taxa.
Given the hardness of MP distance it is natural to ask how well $d_{MP}(\T_1, \T_2)$ can be computed in practice.  One option is to leverage the result in \citep{fischer2014maximum} which proves that there always exists an optimal character that is convex on one of the trees (i.e. has a parsimony score exactly one less than the number of states in the character). Hence we can guess which of the two input trees is convex, guess the number of states $s$ in the optimal character, and then guess the $(s-1)$ edges of the convex tree on which the mutations occur.  Assuming the trees are unrooted, and letting $g(\T_1, \T_2)$ be any safe upper bound on $s$, this gives a deterministic running time of
\[
O\bigg( \sum_{s=2}^{g(\T_1,\T_2)} \binom{2n-3}{s-1} \bigg ).
\]
As the following observation shows, we can take $g(\T_1, \T_2) = \lfloor n/2 \rfloor$.
\begin{observation}
Let $\T_1$ and $\T_2$ be two binary phylogenetic trees on $n \geq 2$ taxa. There exists
an optimal convex character with at most $\lfloor n/2 \rfloor$ states. Moreover, this bound
is tight.
\end{observation}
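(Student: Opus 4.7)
The plan is to prove the upper bound constructively by iteratively reducing the number of states in an optimal convex character, and then exhibit a small example showing tightness.

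First I would invoke the cited result from \citep{fischer2014maximum} to start with an optimal character $f$ that is convex on one of the trees, say $\T_1$, with $s$ states, so $l_f(\T_1) = s-1$. Suppose for contradiction (or rather, as the induction hypothesis) that $s > \lfloor n/2 \rfloor$. Then the $s$ state classes partition the $n$ taxa, and if every class had size at least $2$ we would need $n \geq 2s \geq 2(\lfloor n/2 \rfloor +1) > n$, a contradiction. So there must be some state $c$ held by exactly one taxon $x$. Because $x$ is a leaf, its class in any convex extension is just $\{x\}$ itself, so the parent $p$ of $x$ in $\T_1$ receives some other state $c'$ in that extension.

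Next I would recolour $x$ from $c$ to $c'$, obtaining a character $f'$ with exactly $s-1$ states (state $c$ vanishes). Convexity on $\T_1$ is preserved: every class other than the class of $c'$ is unchanged, and the $c'$-class has simply absorbed the leaf $x$ adjacent to $p$, so it remains a connected subtree. The parsimony score on $\T_1$ drops by exactly one, so $l_{f'}(\T_1) = s-2 = (s-1)-1$, confirming $f'$ is convex on $\T_1$. By Observation~\ref{obs:mostone} applied to $\T_2$ we have $l_{f'}(\T_2) \geq l_f(\T_2) - 1$, so
\begin{equation*}
l_{f'}(\T_2) - l_{f'}(\T_1) \;\geq\; (l_f(\T_2)-1) - (l_f(\T_1)-1) \;=\; d_{MP}(\T_1,\T_2),
\end{equation*}
and therefore $f'$ is also optimal. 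Iterating this reduction (at most $s - \lfloor n/2 \rfloor$ times) yields an optimal convex character with at most $\lfloor n/2 \rfloor$ states.

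For tightness I would give an explicit small example. Take $n=4$ with $\T_1 = ((1,2),(3,4))$ and $\T_2 = ((1,3),(2,4))$. One checks that $d_{MP}(\T_1,\T_2) = 1$, realised by the $2$-state character $AABB$ (score $1$ on $\T_1$, score $2$ on $\T_2$). No $1$-state character can be optimal since it forces both parsimony scores to be zero; hence the minimum number of states used by any optimal (convex) character is exactly $2 = \lfloor 4/2 \rfloor$. For general even $n$ one can append $n/2 - 2$ pendant cherries to both trees in an identical way; such cherries contribute $0$ to the MP distance but each must receive its own state to keep the appended structure convex in the natural extension of $f$, driving the required state count up to $\lfloor n/2 \rfloor$. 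The main subtlety in this direction is verifying that one cannot re-use states across cherries without destroying either convexity or optimality; this follows from the fact that the cherries are pendant and identical in both trees, so reusing a colour across two cherries would allow a further singleton-merge as above, lowering the parsimony score on both trees equally and therefore preserving optimality only if the state count was not already at the minimum — in other words, the lower bound matches the upper bound shown in the first part.
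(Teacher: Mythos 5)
Your upper bound is essentially the paper's argument: pigeonhole a state $c$ used by a single taxon $x$, merge it into a neighbouring state so that the score on the convex tree drops by exactly one, use Observation~\ref{obs:mostone} to bound the loss on the other tree by one, and iterate. One step is stated too strongly, however. It is not true that in \emph{any} minimum convex extension the class of the singleton state is exactly $\{x\}$: for the quartet $((1,2),(3,4))$ with character $f=ABCC$, the minimum extension that assigns $A$ to the internal vertex adjacent to leaves $1$ and $2$ (and $C$ to the other internal vertex) is perfectly valid, and there the parent of the singleton-state leaf $1$ carries the \emph{same} state $A$, so your recipe for $c'$ returns nothing. What you need (and what is true) is that \emph{some} minimum convex extension leaves the parent $p$ outside the $c$-class; alternatively, do what the paper does: root $\T_1$ on the edge entering $x$, observe that Fitch's bottom-up phase necessarily produces a union event at that root because $c$ occurs on no other leaf, and take $c'$ from the non-$c$ part of that union set. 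With that repair the induction goes through exactly as you wrote it.

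For tightness, your $n=4$ example is correct and has the virtue of being checkable by hand (the paper instead cites a computational verification that the $6$-taxon trees of Figure~\ref{fig:binaryAntisymunbounded} force $3$ states), and a single witness is all the paper itself offers for the claim ``the bound is tight.'' The attempted generalization to arbitrary even $n$ by appending identical pendant cherries is, however, broken: a block of cherries attached identically to both trees can be coloured monochromatically with an already-used state, contributing zero mutations to either tree and leaving the character convex, since the enlarged colour class is still connected. Nothing forces each appended cherry to take a fresh state, and the closing sentence of your argument (``preserving optimality only if the state count was not already at the minimum'') is circular rather than a proof. Either drop the general-$n$ claim and let the $n=4$ instance stand as the tightness witness, or supply a genuinely different family of instances in which every optimal character provably needs $\lfloor n/2\rfloor$ states.
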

\begin{proof}
Let $f$ be an optimal convex character. Suppose $f$ has strictly more than $\lfloor n/2 \rfloor$ states. Then there exists a state $t$ that occurs on only one taxon $x$. We root $\T_1$ on the edge entering $x$. If we run Fitch on this rooted tree a union event will necessarily be generated at the root due to the fact that $t$ occurs on only one taxon. Let $C$ be the set of states in this union event, and let $t'$ be any state in $C \setminus \{t\}$. Let $f'$ be the character obtained
from $f$ by assigning state $t'$ to taxon $x$. By re-running Fitch we see that $l_{f'}(\T_1) = l_{f}(T_1)-1$. Moreover, $f'$ has one fewer state than $f$, so $f'$ is convex. By
Observation \ref{obs:mostone} $l_{f'}(\T_2) \geq l_{f}(\T_2) - 1$. Hence, $f'$ is
optimal, convex and has fewer states than $f$. By repeating this process we eventually obtain
an optimal convex character with at most $\lfloor n/2 \rfloor$ states.

The trees in Figure \ref{fig:binaryAntisymunbounded} on 6 taxa are a tight example for this bound: it can easily be verified computationally that for these two trees optimal characters require at least 3 states.
\end{proof}


Of course, even if we take $g(\T_1, \T_2) = \lfloor n/2 \rfloor$, such brute-force algorithms will quickly become impractical for even very small $n$. Hence we turn to
Integer Linear Programming (ILP), which allows us to compute $d_{MP}$ and $d^{i}_{MP}$ for 
larger trees. The ILP for computing $d^{2}_{MP}$ performs very well, allowing computation of $d^{2}_{MP}$ in reasonable time for trees with up to 100 taxa. Unfortunately, in the case of  $d_{MP}$ the ILP struggles to terminate in reasonable time for trees with more than 16 taxa. Future research (i.e. better ILP formulations) will hopefully improve upon this.

The ILP formulation is currently limited to binary trees but the model could be extended to non-binary
trees without too much difficulty.

Let $\T_1$ and $\T_2$ be rooted, binary phylogenetic trees on the same set of taxa $X$,
where $|X|=n$. Let $U$ be the internal nodes of $\T_1$ and $V$ the internal nodes of $\T_2$. Let $s$ be a constant denoting the maximum number of states that any character can have; as discussed taking $s = \lfloor n/2 \rfloor$ is a safe choice.
(To compute
$d^{i}_{MP}$ we simply take $s \leq i$.)
The
following ILP maximizes $l_f(\T_1) - l_f(\T_2)$ ranging over all characters $f$ with at most
$s$ states. To obtain the true parsimony distance the ILP should be run twice, once to
compute the maximum of $l_f(\T_1) - l_f(\T_2)$ and once to compute the maximum of $l_f(\T_2) - l_f(\T_1)$.\\
\\
All variables in the program are binary.\\
\\
First of all we constrain that in both trees the taxa have the same state, and that each taxon
chooses exactly one state. We introduce variables $x_{t,i}$ for each $t \in X$ and
$1 \leq i \leq s$. For each $t \in X$ we introduce the constraint:
\[
\sum_{i=1}^{s} x_{t,i} = 1
\]
We now show how the parsimony score can be computed for $T_1$. The variables and constraints essentially ``hard-code'' Fitch's algorithm. (The encoding of $T_2$ is symmetrical. The
two encodings are linked together via the variables that represent the states of the taxa in $X$, and the objective function, which we shall discuss in due course).\\
\\
Given an internal node $u \in U$, let $l$ be its left child and $r$ be its right child. Fitch's
algorithm tells us to take the intersection of the states at $l$ and $r$, if the intersection
is non-empty, and otherwise the union (in which case we pay 1 mutation). We do this
computation as follows. For $1 \leq i \leq s$ we introduce a variable $x_{u,i}$. The idea
is that $x_{u,i}$ will be 1 if and only if state $i$ is in the set of states at node $u$ (in the
bottom-up phase of Fitch). We determine the set of states at $u$ by performing the
union and intersection computations directly. For that purpose, for $1 \leq i \leq s$ we introduce $x_{u,i}^{\cap}$ and $x_{u,i}^{\cup}$ and the following constraints:
\begin{align*}
x_{u,i}^{\cap} &\leq x_{l,i}\\
x_{u,i}^{\cap} &\leq x_{r,i}\\
x_{u,i}^{\cap} &\geq x_{l,i} + x_{r,i} - 1\\
\\
x_{u,i}^{\cup} &\geq x_{l,i}\\
x_{u,i}^{\cup} &\geq x_{r,i}\\
x_{u,i}^{\cup} &\leq x_{l,i} + x_{r,i}
\end{align*}
The top group of constraints ensure that the $x_{u,i}^{\cap}$ variables reflect the
intersection of the states at the children (i.e. logical AND) and $x_{u,i}^{\cup}$ the union (i.e. logical OR).\\
\\
For each $u \in U$ we have a variable $x_{u}^{\cap}$ and $x_{u}^{\cup}$ which is 1 (resp. 0) if Fitch wants an intersection
operation at node $u$. We can ensure that these variables take
the correct value as follows. Firstly:
\[
\sum_{i=1}^{s} x_{u,i}^{\cap} \geq x_{u}^{\cap}
\]
And, secondly, we add the following constraint for each $1 \leq i \leq s$:
\[
x_{u}^{\cap} \geq x_{u,i}^{\cap}
\]
To ensure that $x_{u}^{\cap}$ and $x_{u}^{\cup}$ are complementary we add the constraint
\[
x_{u}^{\cap} + x_{u}^{\cup} = 1
\]

Now, we have to ensure that $x_{u,i}$ takes the value $x_{u,i}^{\cap}$ whenever $x_{u}^{\cap}$ is 1, and $x_{u,i}^{\cup}$ otherwise. We do this by, for each $1 \leq i \leq s$,
adding the following four constraints:
\begin{align*}
x_{u,i} &\geq x_{u,i}^{\cap}\\
x_{u,i} &\leq x_{u,i}^{\cup}\\
x_{u,i} &\leq x_{u,i}^{\cap} + 1 - x_{u}^{\cap}\\
x_{u,i} &\geq x_{u,i}^{\cup} - x_{u}^{\cap}
\end{align*}
Finally, all that remains is to compute the difference between the two parsimony scores. We do this with the following objective function:
\[
\text{Maximize } \sum_{u \in U} x_{u}^{\cup} - \sum_{v \in V} x_{v}^{\cup}
\]
This concludes the formulation. We have implemented it by using Java to translate the
input trees into an ILP format suitable for solvers such as GLPK, SCIP or CPLEX. We have
used this to verify several of the bounds used in Section \ref{subsec:symbreak}. The source code can be downloaded from \citep{MPdistILP}.

We tested our ILP running CPLEX on a 3.10GHz 64-bit machine with 4 GB RAM.
We observed the following running times. For computation of $d_{MP}$:
\begin{itemize}
\item the two trees $\T_a$ and $\T_b$ on 6 taxa as depicted in Figure \ref{fig:binaryAntisymunbounded}: total running time $<$ 1 second.
\item the two trees $\T_A$ and $\T_B$ on 12 taxa consisting of two copies of $\T_a$ or $\T_b$, respectively: 70 seconds. 
\end{itemize}
For computation of $d^{2}_{MP}$: 
\begin{itemize}
\item the two trees on 8 taxa as depicted in Figure \ref{fig:binaryAntisym2states}:  $<$ 1 second.
\item the two trees $\T_A$ and $\T_B$ on 16 taxa consisting of two copies of $\T_a$ or $\T_b$, respectively: $<$ 1 second.
\item the two trees $\T_{AA}$ and $\T_{BB}$  on 32 taxa consisting of four copies of $\T_a$ or $\T_b$:  6 seconds.
\end{itemize}
\par  

Computation of $d^{i}_{MP}$, for small $i$, is much faster than $d_{MP}$ due to the greatly reduced number of binary variables. We observed that the ILP could compute $d^{2}_{MP}$ for trees with 100 taxa in approximately 140 seconds.

\section{Conclusion}
\label{sec:conclusion}

In this article we have proven that calculating MP distance ($d_{MP}$) is NP-hard on binary trees. Computation of $d^{2}_{MP}$ (the version of the problem where we are restricted to binary characters) is also NP-hard on binary trees. The latter problem is also APX-hard, and determining whether $d_{MP}$ is APX-hard remains an open question. At the moment we do not have an NP-hardness proof for $d^{3}_{MP}$ on binary trees but given that $d^{i}_{MP}$ on binary trees is NP-hard for each $i \geq 4$ we expect that this will also be hard.

We have presented and implemented a simple ILP formulation, which is publicly available at \citep{MPdistILP}. The ILP is much faster than obvious brute-force algorithms and allowed us to verify the MP-distance of the symmetry-breaking gadgets used in the hardness reductions.  The ILP for $d^{2}_{MP}$ is fast but the ILP for $d_{MP}$ does not scale well. An important open problem is therefore to develop an ILP formulation that avoids the present approach of simply hard-coding Fitch's algorithm.

Finally, elucidating the exact relationship between MP distance and other phylogenetic metrics
remains an intriguing challenge.

\section*{Acknowledgement}
We would like to thank Nela Lekic for valuable discussions.

\section{Appendix }\label{sec:appendix}
Here, we present the proofs we omitted in the previous sections. 

\emph{\textbf{Property 2.1.} In $\T_E$, the cherry $\{ \beta_1, \beta_2 \}$ has a different
colour to the cherry $\{ \gamma_1, \gamma_2 \}$.}\\
\\
\noindent
\emph{Proof.} Suppose this is not so. Recolour
$\{ \beta_1, \beta_2 \}$ to some new colour not appearing elsewhere. This increases the
number of mutations in $\T_E$ by at most 1. However, in $\T_V$ the number of mutations
in the $\beta_1, \beta_2, \gamma_1, \gamma_2$ subtree increases from 0 to 2. Possibly
$\T_V$ then saves a single mutation at the root, but in any case the parsimony score of $\T_V$
increases by at least 1. So the new character is still optimal.\\
\\
\noindent
\emph{\textbf{Property 2.2.} In $\T_E$, the (possibly multiple) colours used for the taxa of $B$ (including $\alpha$) are not used
elsewhere in $\T_E$, except possibly $\{ \beta_1, \beta_2 \}$. }\\
\\
\noindent
\emph{Proof.}
Take an optimal extension $F$ of $f$ by applying Fitch's algorithm. Let $c$ be the colour allocated to the root of $B$ by this
extension.  Let $c^{*}$ be the colour of the parent $p_1$ of
the root of $B$, and $c^{**}$ the colour of its parent $p_2$.  Let $c_{\beta}$ be the
colour of the $\{ \beta_1, \beta_2 \}$ taxa and define $c_{\gamma}$ similarly.
Suppose $c = c^{*} = c^{**}$.  We will recolour the character -- and this extension --
to ensure that this is no longer the case. By Property 2.1, $c_{\beta} \neq c_{\gamma}$. If
$c \neq c_{\beta}$ and $c \neq c_{\gamma}$, then recolour $\gamma_1, \gamma_2$ and
their parent to colour $c$.  (This lowers the parsimony score of $\T_E$ by 1, and can lower the parsimony score of $\T_V$ by at most 1, so the character - and the extension - is still optimal.) Otherwise, exactly one of $c_{\beta}$ and $c_{\gamma}$ is equal to $c$. If $c_{\beta}$ has this property, then swap the colours of $\{ \beta_1, \beta_2 \}$ and $\{ \gamma_1, \gamma_2 \}$ (and their parents). So we now have $c = c_{\gamma}$ and $c \neq c_{\beta}$. In particular, there is a mutation on the edge entering
the cherry $\{\beta_1, \beta_2\}$. For technical reasons we now introduce a brand new colour, \emph{bronze} say,
and recolour $\{\beta_1, \beta_2\}$ (and their parent) to be \emph{bronze}. This leaves the parsimony score of $\T_E$ unchanged, and cannot decrease the parsimony score of $\T_V$, so
the character is still optimal. We do this simply to ensure that the colour of $\beta_1, \beta_2$ does not occur anywhere else. Run Fitch's algorithm on $\T_V$ and record the output
as $R$. 

At this point we introduce a new colour \emph{silver}. 
Recolour the following vertices silver: $\beta_1, \beta_2$, their parent, $p_1$ and the entire $c$-coloured connected component inside $B$ starting at the root of $B$. This gives a new character and extension which saves one mutation (on the edge leading into the cherry $\beta_1, \beta_2$) but creates one mutation between $p_1$ and $p_2$. So the
parsimony score of $\T_E$ does not increase. It is
not obvious, but the parsimony score of $\T_V$ will not drop. To see why this is,
note that (under this particular recolouring) the only way the parsimony score of $\T_V$ could drop, is if the recolouring causes a mutation (i.e. union event) at the root of $\T_V$ to vanish, and at the same time does not create any additional mutations elsewhere. If $R$ did not have a mutation event at the root of $\T_V$ anyway we are done, there is nothing to consider. If it did, then in $R$ the union event at the root must have had the form $\{ c, bronze \} \cup W$ where
$W \cap \{c, bronze\} = \emptyset$ and the $W$ is the set of states generated by the bottom-up phase of Fitch's algorithm for the root-incident right subtree of $\T_V$, let us call this $T_{right}$. Now, if the recolouring causes the parsimony
score of $T_{right}$ to increase, we are also done. So suppose the parsimony score
of $T_{right}$ stays the same and $T_{right}$ suddenly has an optimal extension (generated by any method, not necessarily Fitch) in which its
root can be coloured $c$ or $silver$ (which is necessary to save a mutation at the root
of $\T_V$). But then we could take this extension and re-merge the colours $c$ and \emph{silver} back into $c$, showing that $T_{right}$ did originally have an optimal extension in which its root could be coloured $c$. This would mean that $R$ cannot possibly have been
an optimal extension: it claimed a mutation was needed at the root of $\T_V$, but we have just shown that colouring the root $c$ would have avoided mutations on both of its outgoing edges. Contradiction on the assumed optimality of $F$.

Hence, this new character is indeed still optimal. The modified extension
(on $\T_E$) is necessarily also optimal for this new character: if some other extension existed that
induced fewer mutations, then this would violate the assumed optimality of the original
character (i.e. because the parsimony score of $\T_V$ does not decrease).

At this point we can recolour all the monochromatic connected components induced by the extension, and starting at some
vertex of $B$, with brand new colours. This new character must be optimal. (The
score of $\T_V$ under this new character does not decrease, so the recoloured extension must
also be optimal.) Moreover, with the possible exception of $\beta_1, \beta_2$ none of the
colours used for taxa in $B$ are used outside $B$. This is guaranteed because the \emph{silver}
recolouring ensured that there are no longer monochromatic connected components that
connect taxa in $B$ with taxa beyond $B \cup \{ \beta_1, \beta_2 \}$.\\
\\
\noindent
\emph{\textbf{Property 2.3} In $\T_E$, all the taxa in $B$ have the same colour which, with
the possible exception of $\beta_1,\beta_2$, does not appear on taxa outside $B$ and $\alpha$.}\\
\\
\emph{Proof.} Let $f$ be an optimal character. If the taxa in $B$ are monochromatic we are done. Otherwise, run Fitch's algorithm to generate an optimal extension on $\T_E$. (Also run Fitch on $\T_V$ and let $m$ be the number of mutations incurred there, although we do not need to remember the corresponding extension). In $\T_E$ at least one node of $B$ must be a union event (in the bottom-up phase of Fitch). Let $u$ be such a node that is furthest from the root
of $B$, and let $\T_u$ be the subtree of $B$ rooted at $u$. Let $\T_1, \T_2$ be the two subtrees rooted at the two children of $u$. The taxa in
$\T_1$ must be monochromatic with some colour $c_1$, and the taxa in $\T_2$ must be monochromatic with some colour $c_2 \neq c_1$. Suppose, without loss of generality, that
the optimal extension colours $u$ with colour $c_2$. This causes a mutation between
$u$ and the root of $\T_1$. Hence, if we recolour the entire subtree $\T_1$ (i.e. taxa and
non-taxa alike) with colour $c_2$, then this generates a new character $f'$ (and new extension) in which the parsimony score of $\T_E$ drops by (at least) 1. We argue that $f'$
can decrease the parsimony score of $\T_V$ by (at most) 1, from which the optimality
of $f'$ (and its new extension) will follow. Suppose, for the sake of contradiction, that
$f'$ generates $m-2$ or fewer mutations in $\T_V$. Apply Fitch to $f'$ on
$\T_V$. Now, due to the fact that $B$ has essentially the same topology in both $\T_V$ and
$\T_E$, the subtree $\T_u$ is topologically preserved inside $\T_V$. In particular, the
images of all vertices of $\T_u$ are unambiguously defined inside $\T_V$. Now, in its bottom-up phase Fitch will generate in $\T_V$ no union events on the images of the nodes of $\T_u$, due to the fact that all taxa in $\T_u$ have colour $c_2$. (There might
be a union event generated at the point a pendant rooted triplet is grafted onto the image of $\T_u$, see Figure \ref{fig:TVunbounded}, but such subdivision nodes are not considered to be part of the image of $\T_u$.)
At this point we recolour in $\T_V$
$\T_1$ (taxa and non-taxa alike) with colour $c_1$, creating in total exactly one extra mutation, on the edge between $u$ and the root of $\T_1$.\footnote{It is possible that for some taxa $x$ in $\T_1$  the recolouring of $\T_1$ back to $c_1$ causes a mutation in $\T_V$ to \emph{move} from the parent edge of $x$ to its sibling edge, but the mutation will not disappear, due to the fact that by Property 2.2 $x$ has a different colour to all taxa in its sibling subtree, and the fact that we generated the original extension using Fitch's algorithm.}  This new extension is a valid extension of $f$ on $\T_V$ but generates at most $m-1$ mutations, contradicting the assumption that
an optimal extension of $f$ on $\T_V$ had $m$ mutations. Hence, $f'$ must be optimal. 

If $f'$ is not yet monochromatic for $B$, then we re-run Fitch on $\T_E$ to generate a fresh optimal extension, and iterate the entire process until $B$ becomes monochromatic. This process
must terminate (in polynomial time) because each iteration merges two distinctly coloured
subtrees of $B$ into one strictly larger monochromatic subtree.

\noindent
\emph{\textbf{Property 3.} In $\T_E$, all the taxa in $B$ (including $\alpha$) have the same colour, and cherry
$\{\beta_1, \beta_2\}$ also has this colour. Moreover this colour does not appear on any
other taxa i.e. it is unique for $B$ and $\beta_1, \beta_2$.}\\
\\
\noindent
\emph{Proof.} From Property 2.3 we already know that all taxa in $B$ have the same colour and, with
the possible exception of $\beta_1, \beta_2$, this colour does not appear outside $B$. Let
$c$ be the colour used in $B$. If $c$ is the same as the colour of $\{ \beta_1, \beta_2 \}$,
denoted again $c_{\beta}$, we are done. If $c$ is the same colour as $\{ \gamma_1,
\gamma_2 \}$, then swapping the colours on $\{ \beta_1, \beta_2 \}$ and
$\{ \gamma_1, \gamma_2 \}$ preserves optimality, and we are done. (Optimality
is preserved because the parsimony score
of $\T_E$ cannot increase under such a swap, and the parsimony score of $\T_V$ cannot decrease due to symmetry.) So suppose neither $c_\beta$ nor $c_\gamma$ is equal to $c$. Run
Fitch to generate an optimal extension. In the bottom-up phase Fitch will assign states $\{c, c_{\beta}\}$
to $p_1$ and $\{c, c_{\beta}, c_{\gamma}\}$ to $p_2$. Suppose, in the top-down phase, the parent of $p_2$ communicates a state to $p_2$ that is either equal to $c$, or not in $\{c, c_{\beta}, c_{\gamma}\}$. In this case
Fitch allows us to give $p_2$ colour $c$. We can then recolour $\{ \gamma_1, \gamma_2 \}$ to be $c$ (saving
at least one mutation in $\T_E$, and saving at most one mutation in $\T_V$, thus
preserving optimality) and then switch back to the earlier case. If Fitch permits $p_1$ to
be coloured $c$, we simply recolour $\{ \beta_1, \beta_2 \}$ to be $c$ and we are done
because this, via the same analysis, preserves optimality. 
 The only case remaining is if every possible set of choices in the top-down phase of Fitch leads to the
conclusion that both $p_1$ and $p_2$ are coloured $c_\beta$. (This is the only remaining case
because if $p_2$ is or can be coloured $c_{\gamma}$, then Fitch will subsequently allow us to colour $p_1$ with colour $c$,
due to the fact that $c_{\gamma} \not \in \{c, c_{\beta}\}$ i.e. we will be in an earlier case.) So consider an
extension generated by Fitch in this case. We swap the colours
on $\{ \beta_1, \beta_2 \}$ and $\{ \gamma_1, \gamma_2 \}$ (including the colours
of their parents). This colour swap does not affect the number of mutations but it ensures that both edges leaving
$p_1$ carry mutations. Hence, if we now colour $\{ \beta_1, \beta_2 \}$, their parent, and $p_1$
all $c$, both these mutations vanish. So we definitely save one mutation in $\T_E$, and as usual at most one mutation
is saved in $\T_V$. So we are done. This concludes the proof of Property 3.

\bibliographystyle{plainnat}
\bibliography{bibliographyMPdistanceBinary}

\begin{thebibliography}{14}
\providecommand{\natexlab}[1]{#1}
\providecommand{\url}[1]{\texttt{#1}}
\expandafter\ifx\csname urlstyle\endcsname\relax
  \providecommand{\doi}[1]{doi: #1}\else
  \providecommand{\doi}{doi: \begingroup \urlstyle{rm}\Url}\fi

\bibitem[Alimonti and Kann(2000)]{alimonti2000}
P.~Alimonti and V.~Kann.
\newblock Some {APX}-completeness results for cubic graphs.
\newblock \emph{Theoretical Computer Science}, 237\penalty0 (1--2):\penalty0
  123 -- 134, 2000.

\bibitem[Bordewich and Semple(2005)]{bordewichsemple2005}
M.~Bordewich and C.~Semple.
\newblock On the computational complexity of the rooted subtree prune and
  regraft distance.
\newblock \emph{Annals of Combinatorics}, 8\penalty0 (4):\penalty0 409--423,
  2005.

\bibitem[Diestel(2005)]{diestel2005}
R.~Diestel.
\newblock \emph{Graph theory (graduate texts in mathematics)}.
\newblock Springer, 2005.

\bibitem[Felsenstein et~al.(2000)Felsenstein, Archie, Day, Maddison, Meacham,
  Rohlf, and Swofford]{felsenstein_2000}
J.~Felsenstein, J.~Archie, W.~Day, W.~Maddison, C.~Meacham, F.~Rohlf, and
  D.~Swofford.
\newblock The newick tree format., 2000.
\newblock URL
  \url{http://evolution.genetics.washington.edu/phylip/newicktree.html}.

\bibitem[Fischer and Kelk(2014)]{fischer2014maximum}
M.~Fischer and S.~Kelk.
\newblock On the maximum parsimony distance between phylogenetic trees.
\newblock \emph{Annals of Combinatorics}, 2014.
\newblock preliminary version arXiv preprint arXiv:1402.1553.

\bibitem[Fischer and Thatte(2009)]{fischerthatte2009}
M.~Fischer and B.~Thatte.
\newblock Revisiting an equivalence between maximum parsimony and maximum
  likelihood methods in phylogenetics.
\newblock \emph{Submitted to Bulletin of Mathematical Biology}, 2009.

\bibitem[Fitch(1971)]{fitch_1971}
W.~Fitch.
\newblock Toward defining the course of evolution: minimum change for a
  specific tree topology.
\newblock \emph{Systematic Zoology}, 20\penalty0 (4):\penalty0 406--416, 1971.

\bibitem[Haws et~al.(2013)Haws, Hodge, and Yoshida]{haws2013}
D.~Haws, T.~Hodge, and R.~Yoshida.
\newblock Phylogenetic tree reconstruction: geometric approaches.
\newblock In R.~Robeva and T.~Hodge, editors, \emph{Mathematical concepts and
  methods in modern biology -- using modern discrete models.}, pages 307--342.
  Elsevier, 2013.

\bibitem[Holyer(1981)]{holyer1981np}
I.~Holyer.
\newblock The {N}{P}-completeness of edge-coloring.
\newblock \emph{SIAM Journal on Computing}, 10\penalty0 (4):\penalty0 718--720,
  1981.

\bibitem[Huson and Steel(2004)]{husonsteel}
D.~Huson and M.~Steel.
\newblock Distances that perfectly mislead.
\newblock \emph{Systematic Biology}, 53(2):\penalty0 327 -- 332, 2004.

\bibitem[Iersel et~al.(2013)Iersel, Kelk, Leki{\'c}, and
  Scornavacca]{lv2013practical}
L.v. Iersel, S.~Kelk, N.~Leki{\'c}, and C.~Scornavacca.
\newblock A practical approximation algorithm for solving massive instances of
  hybridization number for binary and nonbinary trees.
\newblock \emph{BMC Bioinformatics}, 15:\penalty0 127--127, 2013.

\bibitem[Kelk and Fischer(2014)]{MPdistILP}
S.~Kelk and M.~Fischer.
\newblock Maximum parsimony distance integer linear program {(MPDIST)}.
\newblock \url{http://skelk.sdf-eu.org/mpdistbinary/}, 2014.

\bibitem[Maddison(1989)]{maddison1989reconstructing}
W.~Maddison.
\newblock Reconstructing character evolution on polytomous cladograms.
\newblock \emph{Cladistics}, 5\penalty0 (4):\penalty0 365--377, 1989.

\bibitem[Papadimitriou and Yannakakis(1991)]{papadimitriou1991optimization}
C.H. Papadimitriou and M.~Yannakakis.
\newblock Optimization, approximation, and complexity classes.
\newblock \emph{Journal of computer and system sciences}, 43\penalty0
  (3):\penalty0 425--440, 1991.

\end{thebibliography}

\end{document}